\numberwithin{equation}{section}
\def\cf{{\mathcal F}}
\def\ch{{\mathcal H}}
\def\ck{{\mathcal K}}
\def\cs{{\mathcal S}}
\def\bc{{\mathbb C}}
\def\bn{{\mathbb N}}
\def\br{{\mathbb R}}
\def\bs{{\mathbb S}}
\def\bt{{\mathbb T}}
\def\bz{{\mathbb Z}}
\def\a{\alpha}
\def\b{\beta}
\def\g{\gamma}        \def\G{\Gamma}
\def\d{\delta}        
\def\eps{\varepsilon} 
\def\z{\zeta}
       \def\La{\Lambda}
\def\m{\mu}
\def\n{\nu}
\def\r{\rho}
\def\s{\sigma}
        \def\Om{\Omega}
\newtheorem{Thm}{Theorem}[section]
\newtheorem{Prop}[Thm]{Proposition}
\theoremstyle{definition}
\theoremstyle{remark}
\def\di{\mathrm{d}}
\newcommand{\tr}{\textrm{Tr}}
\newcommand{\ty}[1]{\mathop{\rm {#1}}}
\newcommand{\rmd}{\textrm{d}}
\title{Spectral actions for $q$-particles\\and their asymptotics}
\author{{\sc Fabio Ciolli, Francesco Fidaleo}\\
{ciolli@mat.uniroma2.it, fidaleo@mat.uniroma2.it}
\\
Dipartimento di Matematica,
Universit\`a di Roma Tor Vergata,\\
Via della Ricerca Scientifica, 1, I-00133 Roma, Italy
}
\date{}
\begin{document}

\maketitle

\begin{abstract}
For spectral actions consisting of the average number of particles and arising from open systems made of general free $q$-particles (including Bose, Fermi and classical ones corresponding to $q=\pm1$ and $0$, respectively) in thermal equilibrium, we compute the asymptotic expansion with respect to the natural cut-off. We treat both relevant situations relative to massless and non relativistic massive particles, where the natural cut-off is $1/\b=k_{\rm B}T$ and $1/\sqrt{\b}$, respectively.  
We show that the massless situation enjoys less regularity properties than the massive one. We also treat in some detail the relativistic massive case for which the natural cut-off is again $1/\b$.
We then consider the passage to the continuum describing infinitely extended open systems in thermal equilibrium, by also discussing the appearance of condensation phenomena occurring for Bose-like $q$-particles, $q\in(0,1]$. We then compare the arising results for the finite volume situation (discrete spectrum) with the corresponding infinite volume one (continuous spectrum).
\end{abstract}
\smallskip\small{
{\bf 2020 Mathematics Subject Classification.} {82B30, 58B34, 58J37, 46F10.}\\
{\bf Key words and phrases.} {Thermodynamics of grand canonical ensemble, q-particles, Dirac operator, Spectral action, Bose Einstein Condensation, Distributions, Asymptotic analysis.}
}

\section{Introduction}
Very recently, the investigation of the fascinating topic called Connes' {\it noncommutative geometry} had a formidable growth for many potential applications to mathematics and physics. Among those, we mention the possible role in the attempt to solve the, still open, long-standing problem of the Riemann conjecture involving the zeroes of the zeta function (e.g.\ \cite{C1}). We also mention the crucial problem concerning the attempt to provide a model, still not available, which unify the gravitation with the three remaining elementary interactions (e.g.\ \cite{CC}). The reader is referred to the monograph \cite{Co} for a quite complete and understandable treatise on noncommutative geometry.

The main ingredient arising in noncommutative geometry is the so-called {\it Dirac operator}, which is a kind of square root of the Laplace operator. The Dirac operator is supposed to encode the main properties of the associated, commutative or non commutative, manifold under consideration.
The Dirac operator enters in the definition of the so-called {\it spectral triples} (e.g.\ \cite{Co, CM} and the references cited therein), and the {\it spectral action}
(e.g.\ \cite{CC}). 

Concerning the spectral triples, most of the investigation is devoted to commutative manifolds, and noncommutative ones equipped with a canonical tracial state 
(i.e.\ the easiest noncommutative generalisation of the Lebesgue/Liouville measure), and thus providing von Neumann factors of type ${\rm II}_1$. Only recently in 
\cite{FS, FH, CF1}, it was considered the possibility to investigate examples based on the noncommutative 2-tori, for which the involved Dirac operator is suitably deformed by the use of the Tomita modular operator, the last being unbounded and not ``inner" in all interesting cases. Therefore, such deformed spectral triples, called modular since the modular data might play a crucial role, naturally arise from type ${\rm III}$ representations.

The spectral action $S$ was introduced in \cite{CC} with the aim to provide a set of consistent equations in which all fundamental interactions in nature are unified. Also in this case, the Dirac operator $D$ plays a crucial role because the spectral action is supposed to have the form $S(D,\La)=\tr\big(f(|D|/\La)\big)$. Here, the function $f$ is associated to an extensive quantity, and $\La$ is a natural cut-off determined by the properties of the underlying physical system. It should be also noted that the asymptotic expansion of the heat kernel associated to the (powers of the) Dirac operator $D$ and the spectral action $S(D,\La)$ with respect to the cut-off $\La$ for 
$\La\uparrow\infty$, is an important tool in most approaches to spectral geometry. For further details on this crucial point, the reader is referred to \cite{CC, CC1, E, EGV, EI} and the references cited therein.

Very recently, in the paper \cite{CCS} it was investigated the spectral action associated to the extensive quantity, as the entropy, of some quite general quantum systems describing Fermi particles or, in other words, systems based on the Canonical Anti-commutation Relations. In \cite{DKS}, the investigation of the asymptotics of some spectral actions has been extended 
to particular Hamiltonians, also involving the Bose case. 

By following the mentioned  paper \cite{DKS}, it is then natural to address the investigation of the asymptotics of the spectral actions associated to infinitely extended systems describing Bose/Fermi free gas, and also classical particles obeying to the Boltzmann statistics. Since the properties of such free gases are now available for the general cases $q\in[-1,1]$, relative to $q$-particles or quons, after computing their {\it grand partition function} (cf.\ \cite{CF}), we carry out such an investigation for all cases, $q=\pm1$ and $q=0$ being the Bose/Fermi and Boltzmann cases, respectively.

Indeed, for spectral actions arising from open systems made of such general free $q$-particles in thermal equilibrium, we compute the asymptotic expansion with respect to the natural cut-off associated to $1/\b=k_{\rm B}T$, $T$ being the absolute temperature and $k_{\rm B}\approx 1.3806488\times10^{-23}\,
J K^{-1}$ the Boltzmann constant. For the sake of simplicity, we deal with the spectral actions associated to the ``average number of particles", by considering both relevant situations for non relativistic massive, and massless cases. The spectral actions associated to other 
extensive quantities, like average energy and entropy, can be analogously studied.

Actually, we show that the massless situation enjoy less regularity properties than the massive one. We also investigate in some detail the relativistic massive case, which appears more involved than the corresponding non relativistic one. 

We also consider the passage to the continuum describing infinitely extended open systems in thermal equilibrium for which the Hamiltonian, being a suitable function of the Dirac operator, has continuous spectrum. 

In the context of the continuum, we briefly discuss the appearance of condensation phenomena, occurring for Bose-like $q$-particles where $q\in(0,1]$, by pointing out that the condensation provides an additional addendum in the asymptotics. We also compare the results relative to the asymptotics for the finite volume situation (discrete spectrum) with infinite volume one (continuous spectrum).

We would also like to mention the so-called {\it anyons}, which are particles satisfying, at least formally, a commutation relation similar to \eqref{qcra} enjoyed by quons, where now $q$ is any value in the unit circle $\bt\equiv\{z\in\bc\mid|z|=1\}$.  We point out that such particles might have a role in the, still open problem of the explanation of the integer and fractional {\it quantum Hall effect}, 
see e.g.\ \cite{ZZ}.

At this stage, the $q$-deformed particles ($q\in[-1,1]$) seem to be related to quantum groups and quantum algebras, which drew much attention decades ago. In fact, such particles naturally emerge from exactly solvable models in statistical mechanics which acquire the Yang-Baxter equation. We also point out that the irreducible representations of $q$-deformed particles are substantial extensions of the quantum algebra in connections to the braid group statistics. For such interesting applications of these quons, the reader is referred to the monograph \cite{ZZZ}.

The standard notations and basic results relative to the topics involved in the present note are borrowed by the current literature, partially reported at the end, to which the reader is referred.

\section{The grand-partition function for $q$-particles, $q\in[-1,1]$}
\label{1mic1}

We start with a system whose Hamiltonian $H$ is a selfadjoint positive (i.e.\ $\s(H)\subset[0,+\infty)$) operator with compact resolvent, acting on a separable Hilbert space $\ch$, called the {\it one-particle space}.

In such a situation, the spectrum $\s(H)$ is made by isolated points, accumulating at $+\infty$ if $\ch$ is infinite dimensional. In addition, the multiplicity 
$g(\eps)$ of each eigenvalue $\eps\in\s(H)$ is finite. Summarising, by considering the resolution of the identity of $H$, we have 
$I_\ch=\sum_{\eps\in\s(H)}P_{\eps}$, 
$$
H=\sum_{\eps\in\s(H)}\eps P_{\eps},\,\,\text{and}\,\,g(\eps):=\text{dim}\big(\text{Ran}(P_{\eps})\big)<\infty\,.
$$

We also suppose that at any inverse temperature $\b:=1/k_B T$, $e^{-\b H}$ is trace-class for each $\b>0$, and define the {\it partition function}
$\z:=\ty{Tr}\!\big(e^{-\b H}\big)$. 

In the present paper, we generally deal with the so-called $q$-{\it particles}, usually named as {\it quons}, $q\in[-1,1]$. Such exotic $q$-particles are naturally associated to the following commutation relations (e.g.\ \cite{JSW, BKS})
\begin{equation}
\label{qcra}
{\bf a}_q(f){\bf a}_q^\dagger(g)-q{\bf a}_q^\dagger(g){\bf a}(f)=\langle g,f\rangle I_\ch\,, \quad f,g\in\ch\,,
\end{equation}
$\ch$ being the one-particle space, enjoyed by the creators and annihilators. Such commutation relations can be viewed as an interpolation between that associated to particles obeying to the Fermi statistics (i.e.\ $q=-1$) and that of particles obeying to the Bose statistics (i.e.\ $q=1$), passing for the value $q=0$ describing the classical particles, and so obeying to the Boltzmann statistics. They can be realised by (unbounded in the case $q=1$) operators acting on the corresponding Fock spaces $\cf_q$, see e.g.\ \cite{BR, BKS}.

Concerning the grand partition function ${Z}$, it comes by considering open systems in thermodynamic equilibrium at inverse temperature $\b$ and 
{\it chemical potential} $\m$. It is customary to express the grand partition function ${Z}={Z}(\b,z)$ in terms of the independent variables, which are the inverse temperature $\b$ and the so-called {\it activity} $z:=e^{\b\m}$. When it causes no confusion, we omit to indicate such dependences. We would like to point out that, since the activity $z$ should be considered as an independent thermodynamic variable, it is not involved in the computation of the expansion of the  quantities w.r.t.\ $\b\downarrow0$.
Since we deal with the general cases $q\in[-1,1]$, the partition function
${Z}=Z_q={Z}_q(\b,z)$ is also a function of such an additional parameter $q$. 

The partition function is computed as $\tr\left(e^{-\b K}\right)$
by using the second quantisation grand canonical Hamiltonians $K:=\rmd\G(H)-\m N$ acting on the corresponding Fock space. Here, 
$N$ is the {\it number operator}, and the Fock space would depend on the statistics to which the involved $q$-particles obey. 

Unfortunately, it is first seen in \cite{W} and then in \cite{CF}, that such a method to compute the grand partition function works only for the Bose/Fermi cases corresponding to $q=\pm1$, see e.g.\ \cite{BR}. Such a computation fails even for the classical (or Boltzmann) case $q=0$, where the statistics is however well-known (e.g.\ \cite{FV}, Section 2), but it is necessary to correct by the Gibbs factor as follows:
$$
\sum_{n=1}^{+\infty}\frac{\big(\tr\,e^{-\b H}\big)^n}{n!}z^n,\,\,\,\text{instead of}\,\,\,\sum_{n=1}^{+\infty}\big(\tr\,e^{-\b H}\big)^nz^n\,.
$$

Concerning the remaining cases $q\in(-1,0)\bigcup(0,1)$, the partition function should be computed by a different method because it is completely unknown what should be the statistics enjoyed by such exotic particles. However, in \cite{CF} it was computed the reasonable partition function for all $q\in[-1,1]$, uniquely determined up to a multiplicative constant.

Indeed, by using only the commutation relations \eqref{qcra} and imposing the Kubo-Martin-Schwinger condition in the grand canonical ensemble scheme, in \cite{AF5} it was proven for such a free gas of quons, that the average population of particles for a fixed energy-level $\eps$ in thermodynamic equilibrium satisfies the natural extension 
\begin{equation}
\label{pldi}
\overline{\n}_q(\eps)=\frac{1}{z^{-1}e^{\b\eps}-q}\,,\quad -1\leq q\leq1\,,
\end{equation}
of the well-known Planck distribution $\frac{1}{z^{-1}e^{\b\eps}-1}$.

By taking in account the degeneracy $g(\eps)$ of the single level $\eps$, we get the average population
\begin{equation}
\label{pldibis}
n_q(\eps)=g(\eps)\overline{\n}_q(\eps)=\frac{g(\eps)}{z^{-1}e^{\b\eps}-q}\,,\quad -1\leq q\leq1\,.
\end{equation}
For particles living in the continuum $\br^n$ as for the analysis in \cite{AF5}, the degeneracy $g(\eps)$ is automatically absorbed, see \eqref{b4}.

On the other hand, the same distribution \eqref{pldi} was recovered in \cite{FV}, in the scheme of microcanonical ensemble, by maximising the $q$-{\it entropy functional} 
\begin{equation}
\label{0entc}
S_q(\{\n(\eps)\}):=k_{\rm B}\sum_{\eps\in\s(H)}g(\eps)\bigg[\frac{\big(1+q\n(\eps)\big)}{q}\ln\big(1+q\n(\eps)\big)-\n(\eps)\ln\n(\eps)\bigg]\,,
\end{equation}
w.r.t.\ the set of variables $\{\n(\eps)\}_{\eps\in\s(H)}$ under the constrains 
\begin{equation}
\label{plq1}
\sum_{\eps\in\s(H)}\eps g(\eps)\n(\eps)=E\,, \quad \sum_{\eps\in\s(H)}g(\eps)\n(\eps)=N\,.
\end{equation}
Here, $E$ and $N$ are the pre-assigned values of the total energy and the total number of particles of the physical system under consideration.

To simplify, in \cite{FV} we have supposed that $|\s(H)|<+\infty$. A more refined analysis can be carried out also in the case of infinite spectrum, obtaining the same result.
In addition, it was noticed that \eqref{0entc} reduces to
\begin{equation}
\label{000entc}
S_0(\{\n(\eps)\}):=k_{\rm B}\sum_{\eps\in\s(H)}g(\eps)\n(\eps)\big(1-\ln\n(\eps)\big)
\end{equation}
in the Boltzmann case $q=0$.

Consequently, the mean entropy of a system in thermal equilibrium will be
\begin{equation}
\label{entc9}
S_q=S_q(\b,z)=S_q(\{\n(\eps)\})\big|_{\n(\eps)=\overline{\n}_q(\eps)}\,,
\end{equation}
with the $\overline{\n}(\eps)$ given in \eqref{pldi}.

\medskip

By coming back to the grand partition functions, from \cite{CF} we recall that
\begin{equation}
\label{occq3}
\begin{split}
Z_q=e^{\displaystyle -\frac{\displaystyle \tr\ln\left(I-zqe^{\displaystyle -\b H}\right)}{\displaystyle q}}\,,&\quad 
\left\{\begin{array}{ll}
0<z<\frac{\displaystyle e^{\displaystyle \b\min\s(H)}}{\displaystyle q}\,\,& 0<q<1\,, \\
{}&\\
z>0\,\,& -1<q<0\,.
\end{array}\right.\\
Z_0=e^{z\tr\big(e^{-\b H}\big)}\,,&\quad z>0\,.
\end{split}
\end{equation}

The appropriate form in \eqref{occq3}
of the grand partition function is indeed computed according the prescription given in the following
\begin{Thm}[\cite{CF}, Prop. 3]
For each $\eps\in\s(H)$ fixed, $Z_q$, $q\in[-1,1]$, given in \eqref{occq3} is derivable w.r.t.\ $\eps$ and we have
$$
-\frac1{\b}\frac{\partial\ln Z_q}{\partial\eps}=n_q(\eps)\,,
$$
where the $n_q(\eps)$ are the occupation numbers in \eqref{pldi}. 
\end{Thm}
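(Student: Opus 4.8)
The plan is to convert the functional--calculus expression for $\ln Z_q$ into a genuine series over the spectrum and then observe that differentiating with respect to a single eigenvalue $\eps$ only affects the corresponding summand. First I would insert the spectral resolution $H=\sum_{\eps'\in\s(H)}\eps'P_{\eps'}$, with degeneracies $g(\eps')$, into the trace of the operator logarithm appearing in \eqref{occq3}, obtaining for $q\neq0$
$$
\ln Z_q=-\frac1q\,\tr\ln\big(I-zq\,e^{-\b H}\big)=-\frac1q\sum_{\eps'\in\s(H)}g(\eps')\,\ln\big(1-zq\,e^{-\b\eps'}\big),
$$
and likewise $\ln Z_0=z\sum_{\eps'\in\s(H)}g(\eps')e^{-\b\eps'}$ in the Boltzmann case. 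The range constraints on the activity $z$ recorded in \eqref{occq3} guarantee $zq\,e^{-\b\eps'}<1$ for every $\eps'\in\s(H)$ (for $q<0$ this is automatic, while for $0<q<1$ it follows from $z<e^{\b\min\s(H)}/q$), so each logarithm has a strictly positive argument and is real; together with the trace--class hypothesis on $e^{-\b H}$, which makes $g(\eps')e^{-\b\eps'}$ summable, the series converges absolutely and $\ln Z_q$ is well defined.

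Next I would make precise the statement ``$\eps$ fixed, derivable w.r.t.\ $\eps$'': one promotes the chosen eigenvalue $\eps$ to a real variable while freezing all the other eigenvalues $\eps'\neq\eps$ and all degeneracies. Only the single summand carrying $\eps$ then depends on this variable, so the remaining sum $\sum_{\eps'\neq\eps}$ is constant and term-by-term differentiation is immediate. If one instead prefers to differentiate the entire trace at once, the interchange of $\partial_\eps$ and the summation is justified by dominated convergence, using a bound $\big|\partial_\eps\ln(1-zq\,e^{-\b\eps'})\big|\le C\,e^{-\b\eps'}$ on a neighbourhood of $\eps$ and the summability of the right-hand side.

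The computation that remains is elementary. From
$$
\frac{\partial}{\partial\eps}\ln\big(1-zq\,e^{-\b\eps}\big)=\frac{\b\,zq\,e^{-\b\eps}}{1-zq\,e^{-\b\eps}}
$$
one gets
$$
-\frac1\b\frac{\partial\ln Z_q}{\partial\eps}=\frac1{\b q}\,g(\eps)\,\frac{\b\,zq\,e^{-\b\eps}}{1-zq\,e^{-\b\eps}}=g(\eps)\,\frac{z\,e^{-\b\eps}}{1-zq\,e^{-\b\eps}}=\frac{g(\eps)}{z^{-1}e^{\b\eps}-q}=n_q(\eps),
$$
the penultimate equality following by multiplying numerator and denominator by $z^{-1}e^{\b\eps}$ and comparing with \eqref{pldibis}. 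The case $q=0$ is read off directly from $\ln Z_0$, giving $-\frac1\b\,\partial_\eps\ln Z_0=z\,g(\eps)e^{-\b\eps}=n_0(\eps)$, which is also the $q\to0$ limit of the previous line; the Bose/Fermi endpoints $q=\pm1$ follow by the identical calculation applied to $Z_{\pm1}$.

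I expect the only genuinely delicate point to be conceptual rather than computational: formalising differentiation with respect to an eigenvalue of a discrete spectrum, and checking that the sign and range restrictions on $z$ in \eqref{occq3} keep the argument of the logarithm positive throughout the admissible region, so that the derivative is legitimate. Once these are settled the identity reduces to the one-line calculus above.
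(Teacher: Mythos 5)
Your proposal is correct: the expansion of $\tr\ln(I-zqe^{-\b H})$ over the spectral resolution of $H$, the verification that $zqe^{-\b\eps'}<1$ under the constraints in \eqref{occq3}, and the one-line differentiation of the single summand carrying $\eps$ yield exactly $g(\eps)\big(z^{-1}e^{\b\eps}-q\big)^{-1}$, i.e.\ $n_q(\eps)$ as in \eqref{pldibis} (the theorem cites \eqref{pldi}, but the degeneracy factor you obtain is the intended one). The paper itself gives no proof, importing the statement from \cite{CF}, Prop.~3, where the argument is precisely this spectral-decomposition computation, so your approach coincides with the source's.
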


We also note that $\lim_{q\to0}Z_q=Z_0$, uniformly in the thermodynamic parameters $\b,z$, as stated in Proposition 4 of \cite{CF}.

\section{Spectral actions for $q$-particles, $q\in[-1,1]$}
\label{2mic1}

We discuss the preliminary aspects of some natural spectral actions for general $q$-particles, by taking into consideration the concrete cases relative to non relativistic massive, and massless free particles. The case of relativistic massive particles is treated in some details in Section \ref{rmsex}.

The next sections will be devoted to the detailed expansion involving the infinite volume theory after the passage to the continuum for which the Hamiltonian exhibits continuum spectrum, as well as finite volume theories whose spectra are discrete.
\medskip

We therefore start with the Hamiltonian which is a suitable function of the Dirac operator $D$, by thinking $D$ as proportional to ``the square root" of the Laplacian.
We have
\begin{equation}
\label{sacc0}
H(|D|)\propto\left\{\begin{array}{ll}
                    \!\!\!\!\!\!\! &|D|\,\,\text{for massless case}\,, \\[1ex]
               	\!\!\!\!\!\!\!&D^2\,\,\text{for massive case}\,. \\[1ex]
                    \end{array}
                    \right.
\end{equation}

We note that, the massless case can be referred to relativistic situations (i.e.\ photons or even gravitons), as well as to classical situations relative to the phonons. In these cases, the constant $c$ appearing in \eqref{sacc0000} corresponds to the speed of the light or the velocity of the sound in the considered medium, respectively.
Here, instead, the massive case is always referred to the non relativistic situation. 

As explained in \cite{DKS}, a spectral action is a function of the Dirac operator which has the form
\begin{equation*}
S=S(f,D,\La):=\tr\big(f(|D|/\La)\big)\,,
\end{equation*}
such that the l.h.s.\ is meaningful and, for the cut-off,
\begin{equation}
\label{cuof}
\La=\left\{\begin{array}{ll}
                    \!\!\!\!\!\!\! &1/\b\,\,\text{for massless case}\,, \\[1ex]
               	\!\!\!\!\!\!\!&1/\sqrt{\b}\,\,\text{for massive case}\,, \\[1ex]
                    \end{array}
                    \right.
\end{equation}
with $\b=\frac1{k_{\rm B}T}$ and $k_{\rm B}$ is the Boltzmann constant.

In \cite{DKS}, it was considered some particular classes of one-particle Hamiltonians for Fermi and Bose cases, and for $f$ it was taken extensive quantities as the entropy or the energy. Instead, in the present paper we consider the most general cases $q=[-1,1]$, hence including the physical case $q=0$ and $q=\pm1$. 

We note that no real classical massless particle exists in nature, except the cases describing the so-called {\it quasi-particles} (i.e.\ phonons). However, any $q$-particle, hence including fermions and bosons, behaves as a classical one in the low-density regime, that is when $z\approx0$. Therefore, all results listed in the present paper regarding massless particles for $q=0$, certainly can have some meaningful real application.

Summarising, we deal with more concrete situations involving free gases of massless particles (e.g.\  relativistic particles like photons and gravitons in the integer spin case, and families of neutrinos in the half-integer spin case provided the last are indeed massless, or also classical quasi-particles like phonons) and massive classical ones. Massive relativistic ones, like leptons and hadrons, and composite systems like atomic nuclei, are also treated in some details below. 

We point out that the spectral actions considered here naturally depend on the activity $z$, as well as $q$, satisfying the limitations
\begin{equation}
\label{constra}
\left\{\begin{array}{ll}
                    \!\!\!\!\!\!\! &0< z\,\,\text{if}\,\,q\in[-1,0]\,, \\[1ex]
             	\!\!\!\!\!\!\!&0< z<\frac{e^{\b\min\s(H)}}{q}\,\,\text{if}\,\,q\in(0,1]\,. \\[1ex]
                   \end{array}
                  \right.
\end{equation}

We always assume such limitations in the forthcoming analysis, and recall also that all functions considered in the present paper allow continuation in trivial case corresponding to $z=0$. We also suppose that the Hamiltonian $H$ is itself a function of the modulus $|D|$ of the Dirac operator $D$ according to \eqref{sacc0}.

The first quantity which can be used to build a spectral action, which arises directly from the grand partition function, is the
{\it Landau potential} (LP for short), called also {\it grand potential},
\begin{equation*}
\Omega_q(z):=-\frac{1}{\b}\ln Z_q(z)\,,\quad q\in[-1,1]\,.
\end{equation*} 

It is well known that, from a thermodynamic point of view, $\Omega_q$ can be expressed by $\Omega_q=-PV$, where $P$ is the pressure and $V$ is the volume occupied by a real physical system. Therefore, we obtain
\begin{equation*}
PV=-k_{\rm B}T\frac{\tr\big(\ln(I-zqe^{-\b H})\big)}{q}\,,\quad q\in[-1,1]\smallsetminus\{0\}\,,
\end{equation*} 
which reduces to $PV=k_{\rm B}Tz\,\tr\big(e^{-\b H}\big)$ for $q=0$, that is nothing but the average number $N$ by the well-known equation of states $PV=Nk_{\rm B}T$. 

Using \eqref{cuof} for the cut-off $\La$, 
the spectral action coming from the Landau potential assumes the form
$$
S(f_{\rm LP},D,\La):=-\b h\Om_q=\left\{\begin{array}{ll}
                    \!\!\!\!\!\!\! &-\frac{h}{q}\tr\big(\ln(I-zqe^{-\b H})\big)\,\,\text{if}\,\,q\in[-1,1]\smallsetminus\{0\}\,, \\[1ex]
             	\!\!\!\!\!\!\!&zh\,\tr\big(e^{-\b H}\big)\,\,\text{if}\,\,q=0\,. \\[1ex]
                   \end{array}                 
                   \right.
$$
Here, 
$$
f_{\rm LP}(|D|)=\left\{\begin{array}{ll}
                    \!\!\!\!\!\!\! &-\frac{h}{q}\ln\big(I-zqe^{-\b H(|D|)}\big)\,\,\text{if}\,\,q\in[-1,1]\smallsetminus\{0\}\,, \\[1ex]
             	\!\!\!\!\!\!\!&zh\,e^{-\b H(|D|)}\,\,\text{if}\,\,q=0\,,\\[1ex]
                   \end{array}                 
                   \right.
$$
and the inessential Planck constant $h\approx 6.626070040\times 10^{-34}\,Js$ is considered only for the sake of physical dimensionality.

Other natural spectral actions come from the, obviously extensive, quantities of mean entropy and energy of an infinitely extended system in thermal equilibrium.

For such a purpose, according to \eqref{pldibis} we start by defining the number function given by
\begin{equation}
\label{num0}
N_q(|D|):=\big(z^{-1}e^{\b H(|D|)}-q\big)^{-1}\,.
\end{equation}

Concerning the mean entropy (by using \eqref{0entc}, \eqref{000entc}, \eqref{entc9} and \eqref{pldi})
and energy (by using \eqref{plq1}, and again\eqref{pldi}), we have for the involved functions
$$
f_{\rm Ent}(|D|)\propto\left\{\begin{array}{ll}
                    \!\!\!\!\!\!\! &\frac{I+qN_q(|D|)}{q}\ln\big(I+qN_q(|D|)\big)-N_q(|D|)\ln N_q(|D|)\,\,\text{if}\,\,q\in[-1,1]\smallsetminus\{0\}\,, \\[1ex]
             	\!\!\!\!\!\!\!&N_0(|D|)\big(1-\ln N_0(|D|)\big)\,\,\text{if}\,\,q=0\,,\\[1ex]
                   \end{array}                 
                   \right.
$$
and
$$
f_{\rm En}(|D|)\propto N_q(|D|)H(|D|)\,.
$$
respectively.

\medskip

It is also natural to consider the spectral action associated merely to the average number of particles \eqref{num0}, again an extensive quantity, which seems to be the more flexible situation. Indeed, we have
$$
f_{\rm Number}(|D|):=hN_q(|D|)\,,
$$
and we have multiplied by $h$ for the sake of physical dimensionality as before. 

For the convenience of the reader, we report the following
\begin{Prop}
All the functions $f_\#(|D|)$ listed above are trace-class.
\end{Prop}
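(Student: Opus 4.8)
The plan is to reduce the trace-class property to the summability of the eigenvalues. Since $H=H(|D|)$ is a positive self-adjoint operator with compact resolvent, the real-valued functional calculus gives $f_\#(H)=\sum_{\eps\in\s(H)}f_\#(\eps)P_\eps$, so each $f_\#(H)$ is self-adjoint with eigenvalues $f_\#(\eps)$ of multiplicity $g(\eps)$, and it is trace-class exactly when
$$
\tr\big|f_\#(H)\big|=\sum_{\eps\in\s(H)}g(\eps)\,\big|f_\#(\eps)\big|<+\infty\,.
$$
(Each $f_\#$ is in fact nonnegative on $\s(H)$, so the modulus is immaterial, but working with $|f_\#(\eps)|$ avoids discussing signs.) The one structural fact I would exploit throughout is the standing hypothesis that $e^{-\b H}$ is trace-class for every $\b>0$, that is $\sum_{\eps\in\s(H)}g(\eps)e^{-\b\eps}<+\infty$ for all $\b>0$.

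First I would check that each $f_\#(\eps)$ is finite on $\s(H)$. Writing $\eps_0:=\min\s(H)$, the constraints \eqref{constra} guarantee $z^{-1}e^{\b\eps}-q\geq z^{-1}e^{\b\eps_0}-q>0$ and, for $q\in(0,1]$, $zqe^{-\b\eps}\leq zqe^{-\b\eps_0}<1$; hence $N_q(\eps)$ in \eqref{num0} is a well-defined positive number, the logarithms entering $f_{\rm LP}$ and $f_{\rm Ent}$ have strictly positive arguments, and every $f_\#(\eps)$ is finite. Since $\s(H)$ is discrete and accumulates only at $+\infty$, only finitely many eigenvalues lie below any threshold, so convergence of the series is entirely a matter of the large-$\eps$ tail.

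The core step is the asymptotic estimate of each summand as $\eps\to+\infty$. Setting $n:=N_q(\eps)\sim ze^{-\b\eps}\to0$, a direct expansion gives $f_{\rm Number}(\eps)=hn\sim hze^{-\b\eps}$, and from $\ln(1-x)\sim-x$ one gets $f_{\rm LP}(\eps)\sim hze^{-\b\eps}$ (the $q=0$ case being literally $zhe^{-\b\eps}$); likewise $f_{\rm En}(\eps)=\eps\,n\sim z\,\eps\,e^{-\b\eps}$. The only delicate term is the entropy: from $\tfrac{1+qn}{q}\ln(1+qn)\sim n$ together with $-n\ln n\sim ze^{-\b\eps}(\b\eps-\ln z)$ one finds $f_{\rm Ent}(\eps)\sim z\b\,\eps\,e^{-\b\eps}$. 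In every case there are a constant $C>0$ and an integer $k\in\{0,1\}$ with $|f_\#(\eps)|\leq C(1+\eps)^{k}e^{-\b\eps}$ for $\eps$ large.

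Finally I would absorb the polynomial prefactor into the exponential. Fixing any $\b'\in(0,\b)$, the function $(1+\eps)^{k}e^{-(\b-\b')\eps}$ is bounded on $[0,+\infty)$ by some $C'$, whence $|f_\#(\eps)|\leq CC'e^{-\b'\eps}$ for large $\eps$; summing against the degeneracies and invoking $\sum_{\eps}g(\eps)e^{-\b'\eps}<+\infty$ gives $\tr\big(f_\#(H)\big)<+\infty$ for each listed function. The main obstacle is the entropy term, both because its sign has to be understood and, more essentially, because the $-n\ln n$ contribution produces the linear factor $\eps$ that the exponential must reabsorb; once the elementary $\b'<\b$ device is in place this factor is harmless, and the same device simultaneously covers $f_{\rm En}$.
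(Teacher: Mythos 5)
Your proof is correct, and it rests on the same key fact as the paper's own proof: the standing hypothesis that $e^{-\b H}$ is trace-class for every $\b>0$, combined with the reduction of the trace norm to the eigenvalue sum $\sum_{\eps\in\s(H)}g(\eps)\,|f_\#(\eps)|$. The difference is one of scope and technique. The paper treats only $f_{\rm Number}$ explicitly (deferring the other functions with ``the remaining ones follow similarly'') and does so by an exact global domination at the \emph{same} inverse temperature: $\big(z^{-1}e^{\b\eps}+|q|\big)^{-1}\le z e^{-\b\eps}$ for $q\in[-1,0)$, and $\big(z^{-1}e^{\b\eps}-q\big)^{-1}\le z e^{-\b\eps}\big/\big(1-zqe^{-\b\min\s(H)}\big)$ for $q\in(0,1]$, whence $\tr N_q\le C\,\tr e^{-\b H}<\infty$. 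You instead derive tail asymptotics $|f_\#(\eps)|\le C(1+\eps)^k e^{-\b\eps}$ and absorb the polynomial factor by passing to some $\b'\in(0,\b)$. For the number and Landau-potential actions this is slightly more roundabout than the paper's one-line domination, but it is precisely what is needed for $f_{\rm En}$ and $f_{\rm Ent}$, where the linear factor in $\eps$ (coming from $H$ itself, respectively from the $-N_q\ln N_q$ term) rules out a bound by $e^{-\b\eps}$ at the original $\b$. In short, your argument supplies the actual content behind the paper's ``follow similarly'' and is the more complete of the two; the only cosmetic caveat is that the parenthetical claim that every $f_\#$ is nonnegative on $\s(H)$ is not needed and is best left as the hedge you already make by working with $|f_\#(\eps)|$.
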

\begin{proof}
For the convenience of the reader, we provide the proof for the number spectral action, the unique we treat in detail in the present paper. The remaining ones follow similarly.

The Fermi/Bose cases are treated in \cite{BR}, which provide the proof after standard modifications. Indeed, the case $q=0$ is immediate because $N_q=z\tr e^{-\b H}$, and the cases $q\in[-1,0)$ reduces to the previous one since
$$
N_q=\tr\Big(\big(z^{-1}e^{\b H}+|q|\big)^{-1}\Big)\leq z\tr e^{-\b H}=N_0\,.
$$

Also the case $q\in(0,1]$ reduces to the first one because
$$
N_q=\tr\Big(\big(z^{-1}e^{\b H}-q\big)^{-1}\Big)\leq\frac{z\tr e^{-\b H}}{1-zqe^{-\b\min\s(H)}}=\frac{N_0}{1-zqe^{-\b\min\s(H)}}\,,
$$
after recalling that $z$ satisfies the constrain $z<\frac{e^{\b\min\s(H)}}{q}$ given in \eqref{constra}.
\end{proof}

We end the present section by noticing that, for non scalar particles (i.e.\ particles with a non zero spin), hence in the Bose/Fermi alternative, the one-particle Hibert space is a tensor product of a usual $L^2$-space with a finite dimensional multiplicity space describing the internal degrees of freedom associated to the spin. In all extensive quantities described above, this introduces an inessential multiplicative constant depending on the multiplicity which does not affect the minimal action principle. These simple considerations allow us to extend the analysis to quons, that is when $q\in(-1,0)\cup(0,1)$ for which such a multiplicity space is, on one hand inessential for statistical mechanical computations, and on the other hand completely unknown.

\section{The limit to the continuum}
\label{2mic2}

In the present section, we consider the spectral actions, after passing to the continuum, for infinitely extended systems living on $\br^d$, for which the involved Hamiltonians have continuum spectrum.  The expansion of the involved spectral actions for $\b\downarrow0$ consists only of the leading term, plus a constant term which takes into account possible condensation effects arising in the cases $q\in(0,1]$. In this situation, we put
\begin{equation}
\label{sacc0000}
H=\left\{\begin{array}{ll}
                    \!\!\!\!\!\!\! &cp\,\,\text{for massless case}\,, \\[1ex]
               	\!\!\!\!\!\!\!&\frac{p^2}{2m}\,\,\text{for massive case}\,. \\[1ex]
                    \end{array}
                    \right.
\end{equation}

Here, the Dirac operator (in Fourier transform) is the the momentum vector ${\bf p}$ whose modulus is 
$$
p:=\sqrt{{\bf p}\cdot{\bf p}}=\sqrt{\sum_{i=1}^n p_i^2}\,,
$$
$c\approx299792458\,ms^{-1}$ is the speed of the light, and $m$ is the mass of the involved particles. If the massless case concerns  phonons, $c$ should be considered as the velocity of the sound which, obviously, depends on the involved medium in which the waves are propagating.

It is customary (e.g.\ \cite{Hu}, and further \cite{FV, CF}) to perform the passage to the continuum as follows:
for a gas with $N$ particles, we simply make the replacement
\begin{equation}
\label{b4}
\sum_{\eps\in\s(H)}g(\eps)=\sum_{\{n(\eps)\mid\eps\in\s(H)\}}\rightarrow
V_d\int\frac{\di^d{\bf p}}{h^d}\,,
\end{equation}
$h$ being the Planck constant, and $V_d$ the $d$-dimensional volume of the physical system under consideration.

\medskip

To simplify, we only treat the spectral actions associated to the number-function \eqref{num0}, and split the matter in six cases listed below. We recall that $z>0$ for $q\in[-1,0]$, whereas $0<z<1/q$ for $q\in(0,1]$. We also denote by $\Om_d$, $d\geq1$, the $d$-dimensional ``solid" angle, that is the measure of the $(d-1)$-hypersurface of the hypersphere of radius 1 in $\br^d$, by noticing that $\Om_1=2$. Finally, for the involved spectral action $S$, we simply write $S_{\rm Number}(\b)$.
\medskip

\noindent
\textbf{Massless case, $\bm {q\in[-1,0)}$.} 
By using \eqref{b4}, \eqref{sacc0000} and \eqref{num0}, we obtain after an elementary change of variables,
$$
S_{\rm Number}(\b)=\left(\frac{V_d\Om_d}{|q|h^{d-1}c^d}\int_0^{+\infty}\di y\frac{y^{d-1}}{(z|q|)^{-1}e^y+1}\right)\frac1{\b^d}\,.
$$
\medskip

\noindent
\textbf{Massless case, $\bm{q=0}$.} By reasoning as above, we obtain
$$
S_{\rm Number}(\b)=\left(\frac{zV_d\Om_d}{h^{d-1}c^d}\int_0^{+\infty}\di y\,y^{d-1}e^{-y}\right)\frac1{\b^d}\,.
$$
\noindent
\textbf{Massive case, $\bm{q\in[-1,0)}$.}
As before, we obtain
$$
S_{\rm Number}(\b)=\left(\frac{V_d\Om_d(2m)^{d/2}}{|q|h^{d-1}}\int_0^{+\infty}\di y\frac{y^{d-1}}{(z|q|)^{-1}e^{y^2}+1}\right)\frac1{\b^{d/2}}\,.
$$
\medskip

\noindent
\textbf{Massive case, $\bm{q=0}$.} We have
$$
S_{\rm Number}(\b)=\left(\frac{zV_d\Om_d(2m)^{d/2}}{h^{d-1}}\int_0^{+\infty}\di y\,y^{d-1}e^{-y^2}\right)\frac1{\b^{d/2}}\,.
$$

Due to the phenomenon of condensation of particles in the fundamental state, known as {\it Bose-Einstein condensation} in the case $q=1$, see \cite{BR, Hu, AF5, FV, CF, Z}, the cases $q\in(0,1]$ deserve a more refined analysis.

It is well know that the condensation takes place if and only if the {\it critical density}
\begin{align*}
\r_{\rm c}(q):=\lim_{z\uparrow(1/q)}(N/V_d)\propto&\lim_{z\uparrow(1/q)}\int_0^{+\infty}\di y\frac{y^{d-1}}{(zq)^{-1}e^{y^\a}-1}\\
=&\int_0^{+\infty}\di y\frac{y^{d-1}}{e^{y^\a}-1}\propto\r_{\rm c}(1)<+\infty\,.
\end{align*}

Here, $
\a=\left\{\begin{array}{ll}
                    \!\!\!\!\!\!\! &1\,\,\text{for massless case}\,, \\[1ex]
               	\!\!\!\!\!\!\!&2\,\,\text{for massive case}\,, \\[1ex]
                    \end{array}
                    \right.$ and the above integral is convergent if and only if $\a-d+1<1$. Therefore, the condensation takes place if and only if  
                    $\left\{\begin{array}{ll}
                    \!\!\!\!\!\!\! &d\geq2\,\,\text{for massless case}\,, \\[1ex]
               	\!\!\!\!\!\!\!&d\geq3\,\,\text{for massive case}\,. \\[1ex]
                    \end{array}
                    \right.$  
It is then customary to separate the contribution corresponding to ground level $\eps=0$ from the remaining ones, see e.g.\ \cite{Hu}, and \cite{CF}, Section 5. Therefore, we set
$$
a\,\,\left\{\begin{array}{ll}
                    \!\!\!\!\!\!\! &=0\,\,\text{for massless case when $d=1$ and massive case when $d=1,2$}\,, \\[1ex]
               	\!\!\!\!\!\!\!&\geq0\,\,\text{for massless case when $d\geq2$ and massive case when $d\geq3$}\, \\[1ex]
                    \end{array}
                    \right.
$$
where $a$, describing the strength of the possible amount of condensate, is a dimensionless number.

Indeed, in the condensation regime, we deal with a multi-phase situation, and thus the portion of the condensate, possibly also vanishing, can vary according to many a-priori constrains, such as the boundary conditions used to reach the thermodynamic limit, see e.g.\ \cite{BR}, Section 5.2.5. Therefore, the coefficient $a$ will take into account of the possible presence of the condensate. 
We also note that, in inhomogeneous systems the appearance of the condensate is not only connected with the finiteness of the critical density, see \cite{F25, F26, F27}.

We then have the following situations. 
\medskip

\noindent
\textbf{Massless case, $\bm{q\in(0,1]}$.}
By using \eqref{b4}, \eqref{sacc0000} and \eqref{num0}, we obtain after an elementary change of variables,
\begin{equation}
\label{1asdt1}
S_{\rm Number}(\b)=\left(\frac{V_d\Om_d}{qh^{d-1}c^d}\int_0^{+\infty}\di y\frac{y^{d-1}}{(zq)^{-1}e^y-1}\right)\frac1{\b^d}
+a\frac{hz}{1-zq}\,.
\end{equation}
\medskip

\noindent
\textbf{Massive case, $\bm{q\in(0,1]}$.}
Again by using \eqref{b4}, \eqref{sacc0000} and \eqref{num0}, we obtain after an elementary change of variables,
\begin{equation}
\label{1asdt0}
S_{\rm Number}(\b)=\left(\frac{V_d\Om_d(2m)^{d/2}}{qh^{d-1}}\int_0^{+\infty}\di y\frac{y^{d-1}}{(zq)^{-1}e^{y^2}-1}\right)\frac1{\b^{d/2}}
+a\frac{hz}{1-zq}\,.
\end{equation}

It should be noted that,
\begin{itemize}
\item[(i)] if $\b\sim0$, the right-hand addendum in the last two cases, describing the possible presence of the condensate, is obviously negligible;
\item[(ii)] the meaning of the condensation of photons should be understood as explained in \cite{KSVW}, see also \cite{FV}, Section 6.
\end{itemize}

\section{Finite volume theory}
\label{fvte}

We investigate the finite volume theory for arbitrary dimension $d=1,2,\dots$, and $q\in[-1,1]$. For such a purpose, we use the periodic conditions on the $d$-interval $[0,L]^d$. We also recall that a quantity $F=o\big(\eps^\infty\big)$ as $\eps\to\#$, simply means
$$
\lim_{\eps\to\#}F(\eps)/\eps^n=0\,,\quad \text{for all}\,\,n\in\bn\,.
$$

The momentum operator ${\bf p}$ of a particle living in the manifold $[0,L]^d\sim\bt^d$, is given by
$$
p_i:=-\imath\hbar\frac{\partial\,\,}{\partial x_i}\,,\quad i=1,2,\dots, d\,,
$$           
acting on $L^2([0,L]^d,\di^d x)$. 

By passing in momentum space through the Fourier transform, and using the wave-vector ${\bf k}\in\bz^d$, we easily compute by imposing the periodic boundary conditions,
$$
{\bf p}=\frac{h}{L}{\bf k}\,,\quad {\bf k}\in\bz^d\,,
$$
and thus
$$
p^\a=\Big(\frac{h}{L}\big({\bf k} \cdot {\bf k}\big)^{1/2}\Big)^\a\,,\quad {\bf k}\in\bz^d
$$
where, as before, $\a=1$ in the massless case, and $\a=2$ in the massive case.

\medskip

Concerning the asymptotics, we treat first the simpler massive case.
\begin{Prop}
\label{mssv}
For the spectral action $S_{\rm Number}(\b)$ in the massive case,
$$
S_{\rm Number}(\b)=\left(\frac{L^d\Om_d(2m)^{d/2}}{h^{d-1}}\int_0^{+\infty}\di y\frac{y^{d-1}}{z^{-1}e^{y^2}-q}\right)\frac1{\b^{d/2}}+o\big(\b^\infty\big)\,,\,\,\,\text{as}\, \,\,\b\downarrow0\,,
$$
where, for the activity, $0<z$ for $q\in[-1,0]$, and $0< z<1/q$ for $q\in(0,1]$.
\end{Prop}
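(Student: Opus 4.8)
The plan is to recognise $S_{\rm Number}(\b)$ as a fixed profile summed over the full lattice $\bz^d$ and to compare it with the corresponding integral via the Poisson summation formula, exploiting the fact that for a full lattice the correction terms lie beyond all orders in $\b$. First I would use \eqref{num0} together with the trace prescription $f_{\rm Number}=hN_q$ and the periodic quantisation $p^2=(h/L)^2|{\bf k}|^2$ (here $\a=2$) to write
\begin{equation*}
S_{\rm Number}(\b)=h\sum_{{\bf k}\in\bz^d}\frac1{z^{-1}e^{\b\eps_{\bf k}}-q}\,,\qquad \eps_{\bf k}=\frac{h^2}{2mL^2}|{\bf k}|^2\,,
\end{equation*}
the sum being absolutely convergent by the trace-class property of Section \ref{2mic1}. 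Introducing the scale $t:=\b h^2/(2mL^2)$ and the one-variable profile $g(u):=(z^{-1}e^u-q)^{-1}$, I would rewrite this as $h\sum_{{\bf k}\in\bz^d}\f(\sqrt t\,{\bf k})$ with $\f({\bf x}):=g(|{\bf x}|^2)$, so that the whole question becomes a lattice-versus-integral comparison for a single radial function.

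The decisive preliminary step is to check that $\f\in\cs(\br^d)$. I would first observe that under the activity constraints the denominator stays uniformly positive on $[0,+\infty)$: for $q\in[-1,0]$ one has $z^{-1}e^u-q\ge z^{-1}>0$, while for $q\in(0,1]$ the bound $z<1/q$ gives $z^{-1}e^u-q\ge z^{-1}-q>0$. Hence $g$ extends to a smooth (indeed holomorphic) function on a neighbourhood of $[0,+\infty)$, and composing with the polynomial $|{\bf x}|^2$ makes $\f$ smooth on all of $\br^d$, the origin included. Combined with the Gaussian-type decay $g(u)\sim ze^{-u}$, this shows $\f$ and all its derivatives decay faster than any power, so $\f$ is Schwartz and consequently so is $\widehat\f$.

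Next I would apply Poisson summation to $\f(\sqrt t\,\cdot)$, obtaining the exact identity $\sum_{{\bf k}}\f(\sqrt t\,{\bf k})=t^{-d/2}\widehat\f(0)+t^{-d/2}\sum_{{\bf n}\ne0}\widehat\f({\bf n}/\sqrt t)$. The zero mode reproduces the leading term: since $\widehat\f(0)=\int_{\br^d}g(|{\bf x}|^2)\,\di^d{\bf x}=\Om_d\int_0^{+\infty}r^{d-1}(z^{-1}e^{r^2}-q)^{-1}\,\di r$, multiplying by $h$ and restoring $t=\b h^2/(2mL^2)$ yields precisely the announced coefficient times $\b^{-d/2}$. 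It then remains to control $h\,t^{-d/2}\sum_{{\bf n}\ne0}\widehat\f({\bf n}/\sqrt t)$, and this is the step I expect to be the real point of the proof. Using the Schwartz bound $|\widehat\f(\x)|\le C_N(1+|\x|)^{-N}$ for arbitrary $N$, each nonzero mode is $O(t^{N/2})$, and summing over the lattice (with $N>d$) keeps the remainder $O(t^{N/2})$ for every $N$; the polynomial prefactor $h\,t^{-d/2}$ does not spoil this, so the whole leftover is $o(\b^\infty)$ as $\b\downarrow0$. The main obstacle is thus to guarantee that no intermediate power corrections appear---exactly the feature that makes the massive case simpler, since the Gaussian profile renders $\f$ genuinely Schwartz, whereas in the massless case the profile $g(|{\bf x}|)$ is non-smooth at the origin and only polynomial decay of $\widehat\f$ would survive.
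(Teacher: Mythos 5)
Your proof is correct and follows essentially the same route as the paper: both reduce the spectral action to the full-lattice sum $h\sum_{{\bf k}\in\bz^d}\f(\sqrt{\b}\,{\bf k})$ of a fixed Schwartz profile and use the fact that such a sum equals the corresponding integral up to an $o(\b^\infty)$ error, followed by the same change of variables. The only difference is that the paper obtains this lattice-versus-integral comparison by citing Proposition 2.27 of \cite{EI} (see also Lemma 2.11 of \cite{EGV}), whereas you prove it directly via Poisson summation; your check that $g(|{\bf x}|^2)$ is genuinely smooth at the origin (precisely what fails in the massless case) is exactly the hypothesis needed to invoke the cited result.
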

\begin{proof}
The proof directly follows by \cite{EI}, Proposition 2.27 (see also \cite{EGV}, Lemma 2.10). Indeed, with $\eps:=\sqrt{\b}$,
$$
S_{\rm Number}(\eps^2)\equiv h\sum_{{\bf k}\in\bz^d}\frac1{z^{-1}e^{\frac{h^2}{2mL^2}(\eps k)^2}-q}=h\int_{\br^d}\frac{\di^d{\bf k}}{z^{-1}e^{\frac{h^2}{2mL^2}(\eps k)^2}-q}
+o\big(\eps^{\infty}\big)\,.
$$

The assertion now follows after the elementary change of variable $y=\frac{h\eps}{(2m)^{1/2}L}k$, and integrating on the solid angles.
\end{proof}

Due to the particular form of the involved Hamiltonian, the massless case presents less regularity properties. 
\begin{Prop}
\label{lere}
For the spectral action $S_{\rm Number}(\b)$ in the massless case,
$$
S_{\rm Number}(\b)=\left(\frac{L^d\Om_d}{c^dh^{d-1}}\int_0^{+\infty}\di y\frac{y^{d-1}}{z^{-1}e^{y}-q}\right)\frac1{\b^{d}}+o\big(\b^{-d}\big)\,,\,\,\,\text{as}\, \,\,\b\downarrow0\,,
$$
where, for the activity, $0<z$ for $q\in[-1,0]$, and $0<z<1/q$ for $q\in(0,1]$.
\end{Prop}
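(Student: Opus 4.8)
The plan is to compare the lattice sum defining $S_{\rm Number}(\b)$ with the corresponding integral over $\br^d$, just as in the massive case, but to track the error by hand since the high-order Euler--Maclaurin estimate of \cite{EI} (Proposition 2.27) is no longer available: the summand is only Lipschitz, not smooth, at the origin. Writing $a:=\b ch/L$ and
$$
\f(\mathbf{x}):=\big(z^{-1}e^{a|\mathbf{x}|}-q\big)^{-1}\,,\qquad \mathbf{x}\in\br^d\,,
$$
the quantization rule $p=\frac{h}{L}|\mathbf{k}|$ together with $H=cp$ gives $S_{\rm Number}(\b)=h\sum_{\mathbf{k}\in\bz^d}\f(\mathbf{k})$.

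First I would identify the leading term. Passing to polar coordinates and substituting $y=a\,r$ with $r=|\mathbf{x}|$,
$$
h\int_{\br^d}\f(\mathbf{x})\,\di^d\mathbf{x}=h\Om_d\int_0^{+\infty}\frac{r^{d-1}}{z^{-1}e^{ar}-q}\,\di r
=\frac{L^d\Om_d}{c^dh^{d-1}}\,\frac1{\b^d}\int_0^{+\infty}\frac{y^{d-1}}{z^{-1}e^{y}-q}\,\di y\,,
$$
which is exactly the asserted leading coefficient times $\b^{-d}$; the last integral converges at $0$ (integrand $\sim y^{d-1}/(z^{-1}-q)$, with $z^{-1}-q>0$ under \eqref{constra}) and at $+\infty$ (integrand $\sim z\,y^{d-1}e^{-y}$).

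Then I would estimate the remainder by a Riemann-sum bound. Tiling $\br^d$ by unit cubes $Q_{\mathbf{k}}:=\mathbf{k}+[-\tfrac12,\tfrac12)^d$ and using $\f(\mathbf{k})=\int_{Q_{\mathbf{k}}}\f(\mathbf{k})\,\di^d\mathbf{x}$,
$$
\Big|\sum_{\mathbf{k}\in\bz^d}\f(\mathbf{k})-\int_{\br^d}\f\Big|
\le\sum_{\mathbf{k}\in\bz^d}\int_{Q_{\mathbf{k}}}|\f(\mathbf{k})-\f(\mathbf{x})|\,\di^d\mathbf{x}
\le\sqrt d\,\sum_{\mathbf{k}\in\bz^d}\sup_{\mathbf{x}\in Q_{\mathbf{k}}}|\nabla\f(\mathbf{x})|\,.
$$
Since $\f$ is radial, $|\nabla\f(\mathbf{x})|=a\,g(a|\mathbf{x}|)$ with $g(u):=z^{-1}e^{u}(z^{-1}e^{u}-q)^{-2}$ bounded on $[0,+\infty)$ and decaying like $z\,e^{-u}$ as $u\to+\infty$. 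For such a bounded, exponentially decaying radial majorant the sum of suprema over unit cubes is controlled by a constant multiple of $\int_{\br^d}|\nabla\f|$ plus the $O(a)=O(\b)$ contribution of the finitely many cubes meeting a fixed ball about the origin. The same substitution $y=ar$ gives
$$
\int_{\br^d}|\nabla\f|=\Om_d\,a^{1-d}\int_0^{+\infty}\frac{z^{-1}e^{y}\,y^{d-1}}{(z^{-1}e^{y}-q)^2}\,\di y=O\!\big(\b^{1-d}\big)\,,
$$
the integral again converging by \eqref{constra}. Hence the whole error is $O(\b^{1-d})=o(\b^{-d})$, and combining it with the leading-term computation yields the claim.

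The main obstacle, and precisely the reason the massive statement's $o(\b^\infty)$ degrades to $o(\b^{-d})$ here, is the conical singularity of $\mathbf{x}\mapsto|\mathbf{x}|$ at the origin: $\f$ is globally Lipschitz but fails to be $C^1$ there, so the high-order Euler--Maclaurin cancellations are unavailable and one is left with a genuine correction of order $\b^{1-d}$ (already visible for $d=1$, where the $\mathbf{k}=0$ term alone produces an $O(1)$ discrepancy). The only points needing a little care are that $\nabla\f$ does not extend continuously to $0$ while $|\nabla\f|$ does, so that the Lipschitz/Riemann bound is legitimate, and the elementary comparison of $\sum_{\mathbf{k}}\sup_{Q_{\mathbf{k}}}|\nabla\f|$ with $\int_{\br^d}|\nabla\f|$ for the radial, exponentially decaying integrand.
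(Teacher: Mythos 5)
Your proof is correct, but it follows a genuinely different route from the paper's. The paper first settles $d=1$ by a monotone sandwich: since the radial profile is decreasing, the sum over $k\ge 1$ and the rescaled integral differ by at most the single term $\frac{hz}{1-zq}$; it then inducts on the dimension by splitting $\bz^d$ into its $2^d$ orthants plus the lower-dimensional coordinate sublattices, arriving at the error $O\big(\b^{-(d-1)}\big)$ of \eqref{asdt0}--\eqref{asdt}. You instead treat all dimensions at once with a unit-cube Riemann-sum estimate, bounding the sum-minus-integral by $\sqrt d\,\sum_{\mathbf k}\sup_{Q_{\mathbf k}}|\nabla\f|\lesssim\int_{\br^d}|\nabla\f|=O\big(\b^{1-d}\big)$, which is the same order. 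Your argument avoids both the induction and the multidimensional sandwich (which the paper only sketches with ``compute as before''), and it isolates cleanly why the massive-case $o(\b^\infty)$ degrades here: the summand is Lipschitz but not $C^1$ at the origin. What the paper's orthant decomposition buys in exchange is structural information: it identifies the exact constant $\frac{hz}{1-zq}$ for $d=1$ (reused in \eqref{asdt000} and Section 8) and directly motivates the conjectured full expansion \eqref{anzaf}, none of which a crude gradient bound can see. The one step you should spell out is the uniformity in $\b$ of the comparison $\sum_{\mathbf k}\sup_{Q_{\mathbf k}}|\nabla\f|\le C\int_{\br^d}|\nabla\f|+O(a)$: writing $a=\b ch/L$, one has $|\nabla\f(\mathbf x)|\asymp a\,e^{-a|\mathbf x|}$ with constants depending only on $z$ and $q$ (using $z^{-1}e^{u}-q\ge(1-zq)z^{-1}e^{u}$ for $q>0$ and $\ge z^{-1}e^{u}$ for $q\le 0$), whence $\sup_{Q_{\mathbf k}}|\nabla\f|\le C e^{a\sqrt d}\inf_{Q_{\mathbf k}}\big(a\,e^{-a|\mathbf x|}\big)$ for the cubes away from a fixed ball about the origin, and $e^{a\sqrt d}\le e^{\sqrt d}$ once $a\le 1$, so the constant is indeed uniform as $\b\downarrow 0$.
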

\begin{proof}
We start to prove the assertion for $d=1$. Indeed, in such a situation, 
\begin{equation}
\label{asdt000}
S_{\rm Number}(\b)=2h\sum_{k=1}^{+\infty}\frac1{z^{-1}e^{\frac{\b ch}{L}k}-q}+\frac{hz}{1-zq}\,.
\end{equation}
In addition,
\begin{align*}
h\sum_{k=1}^{+\infty}\frac1{z^{-1}e^{\frac{\b ch}{L}k}-q}
\leq&\left(\frac{L}{c}\int_0^{+\infty}\frac{\di y}{z^{-1}e^{y}-q}\right)\frac1{\b}\\
\leq&h\sum_{k=1}^{+\infty}\frac1{z^{-1}e^{\frac{\b ch}{L}k}-q}+\frac{hz}{1-zq}\,.
\end{align*}
Therefore,
$$
0\leq\left(\frac{L}{c}\int_0^{+\infty}\di y\frac{\di y}{z^{-1}e^{y}-q}\right)\frac1{\b}-h\sum_{k=1}^{+\infty}\frac1{z^{-1}e^{\frac{\b ch}{L}k}-q}
\leq\frac{hz}{1-zq}\,,
$$
and thus
\begin{align*}
0\leq\left[\left(\frac{L}{c}\int_0^{+\infty}\di y\frac{\di y}{z^{-1}e^{y}-q}\right)\frac1{\b}-h\sum_{k=1}^{+\infty}\frac1{z^{-1}e^{\frac{\b ch}{L}k}-q}\right]\bigg/(1/\b)
\leq\frac{\b hz}{1-zq}\to0\,,
\end{align*}
as $\b\downarrow0$. We then obtain
$$
h\sum_{k\in\bz}^{+\infty}\frac1{z^{-1}e^{\frac{\b ch}{L}|k|}-q}
=\left(\frac{2L}{c}\int_{0}^{+\infty}\frac{\di y}{z^{-1}e^{y}-q}\right)\frac1{\b}+o\big(1/\b\big)\equiv O\big(1/\b\big)\,.
$$

Suppose that the assertion holds true for each integer $0\leq s\leq d-1$, and compute as before. First we note that
\begin{equation}
\label{asdt0}
S_{\rm Number}(\b)=2^d h\sum_{k_1,\dots,k_d\geq1}\frac1{z^{-1}e^{\frac{\b ch}{L}k}-q}+O\big(1/\b^{d-1}\big)\,,
\end{equation}
where the factor $2^d$ appears after partitioning the whole sum in partial sums on the $2^d$ $d$-ants (which are $2$ half-lines when $d=1$, $4$ quadrants when $d=2$, $8$ octants when $d=3$, and so on).

On the other hand,
\begin{equation}
\label{asdt}
\begin{split}
0\leq&\left(\frac{L^d\Om_d}{2^dc^dh^{d-1}}\int_0^{+\infty}\di y\frac{y^{d-1}}{z^{-1}e^{y}-q}\right)
\frac1{\b^{d}}-h\sum_{k_1,\dots,k_d\geq1}\frac1{z^{-1}e^{\frac{\b ch}{L}k}-q}\\
\leq&O\big(1/\b^{d-1}\big)\,.
\end{split}
\end{equation}

Collecting together \eqref{asdt0} and \eqref{asdt}, we obtain the assertion.
\end{proof}

\section{One dimensional case.}

The asymptotics of the number-function for the massive case when $d=1$ and $q=0$ can be explicitly computed by using an appropriate Jacobi theta function, see e.g.\ \cite{WW}, pag. 464. Indeed
(e.g.\  \cite{EI}, Example 2.41), we get
\begin{align*}
S_{\rm Number}(\b)=&\sqrt{2m\pi}L\frac1{\sqrt{\b}}+2\sqrt{2m\pi}L\frac1{\sqrt{\b}}
\sum_{n=1}^{+\infty}e^{-\frac{2mL^2\pi^2n^2}{h^2\b}}\\
=&\sqrt{2m\pi}L\frac1{\sqrt{\b}}+o\big(\b^\infty\big)\,,\,\,\,\text{as}\, \b\downarrow0\,.
\end{align*}

In the massless case when $d=1$, we can provide the following more refined result concerning the asymptotics. The first case corresponds to $q=0$ for which it can be also explicitly computed as in the massive case above.

Indeed, we have for sufficiently small $\b$,
\begin{equation}
\label{anzze}
\begin{split}
S_{\rm Number}(\b)=&zh\sum_{k\in\bz}e^{-\frac{ch\b}{L}|k|}=zh\frac{e^{\frac{ch\b}{L}}+1}{e^{\frac{ch\b}{L}}-1}\\
=&\frac{2zL}{c}\frac1{\b}+2zh\sum_{n=1}^{+\infty}\Big(\frac{ch}{L}\Big)^{2n-1}\frac{B_{2n}}{(2n)!}\b^{2n-1}\,,
\end{split}
\end{equation}
where $B_{2n}$ are the Bernoulli numbers.
We now extend the analysis to all Fermi-like situations $q\in[-1,0)$.
\begin{Prop}
\label{mls}
For the spectral action $S_{\rm Number}(\b)$ in the massless case when $d=1$ and $q\in[-1,0]$,
\begin{equation}
\label{berrie}
\begin{split}
S_{\rm Number}(\b)=&\left(\frac{2L}{c}\int_0^{+\infty}\frac{\di y}{z^{-1}e^{y}-q}\right)\frac1{\b}+\frac{hz}{1-zq}\\
+&2h\sum_{n=0}^{+\infty}\frac{\z(-n)c_n}{n!}\b^n+o\big(\b^{\infty}\big)\,,\,\,\,\text{as}\, \b\downarrow0
\end{split}
\end{equation}
where, for the activity, $0< z$ for $q\in[-1,0]$, $\z$ is the Riemann zeta function, and finally 
$c_n=\frac{\di^n\,\,}{\di x^n}\Big(\frac1{z^{-1}e^{\frac{ch}{L}x}-q}\Big)\Big|_{x=0}$.
\end{Prop}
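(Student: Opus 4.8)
The plan is to reduce the statement to the small-$\b$ asymptotics of a single Riemann-type sum and then extract it by a Mellin-transform (Cahen--Mellin) argument, which is the mechanism that naturally produces the values $\z(-n)$. Recall from \eqref{asdt000} that in the massless case with $d=1$ one has $S_{\rm Number}(\b)=2h\sum_{k=1}^{+\infty}g(\b k)+\frac{hz}{1-zq}$, where $g(x):=\big(z^{-1}e^{\frac{ch}{L}x}-q\big)^{-1}$ and the isolated term $\frac{hz}{1-zq}=hg(0)$ is precisely the $k=0$ contribution. For $q\in[-1,0]$ the denominator is bounded below by $z^{-1}+|q|>0$, so $g$ extends to a function on $[0,+\infty)$ that is smooth and, together with all its derivatives, decays exponentially; in particular each $c_n=g^{(n)}(0)$ is well defined. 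Thus everything rests on establishing the complete asymptotic expansion of $\Sigma(\b):=\sum_{k\ge1}g(\b k)$ as $\b\downarrow0$.

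First I would set up the Mellin transform $\widetilde g(s):=\int_0^{+\infty}g(x)x^{s-1}\,\di x$, holomorphic for $\rel s>0$ thanks to the exponential decay at $+\infty$ and boundedness at $0$. Splitting $\int_0^1+\int_1^{+\infty}$ and inserting the Taylor series $g(x)=\sum_{n\ge0}\frac{c_n}{n!}x^n$ near the origin yields a meromorphic continuation of $\widetilde g$ to all of $\bc$ whose only singularities are simple poles at $s=-n$, $n=0,1,2,\dots$, with residue $c_n/n!$, the tail $\int_1^{+\infty}$ being entire. Mellin inversion for $g$ combined with $\sum_{k\ge1}k^{-s}=\z(s)$ then gives, for any $\s>1$, the Cahen--Mellin representation $\Sigma(\b)=\frac1{2\pi i}\int_{\rel s=\s}\z(s)\widetilde g(s)\b^{-s}\,\di s$, the interchange of sum and integral being justified by absolute convergence on $\rel s=\s>1$.

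Next I would shift the contour leftwards to $\rel s=-N-\frac12$ and apply the residue theorem. The simple pole of $\z$ at $s=1$ (residue $1$) produces $\widetilde g(1)\,\b^{-1}$, and $\widetilde g(1)=\int_0^{+\infty}g=\frac{L}{ch}\int_0^{+\infty}\frac{\di y}{z^{-1}e^{y}-q}$ after the substitution $y=\frac{ch}{L}x$; multiplied by $2h\b^{-1}$ this is exactly the leading term $\frac{2L}{c}\big(\int_0^{+\infty}\frac{\di y}{z^{-1}e^{y}-q}\big)\b^{-1}$. Each pole of $\widetilde g$ at $s=-n$ contributes $\z(-n)\frac{c_n}{n!}\b^{n}$, so $2h\Sigma(\b)$ reproduces the series $2h\sum_{n\ge0}\frac{\z(-n)c_n}{n!}\b^n$ of \eqref{berrie}, and adding back $\frac{hz}{1-zq}$ completes the claimed form. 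One may note the internal consistency that the $n=0$ term equals $2h\z(0)c_0=-hg(0)=-\frac{hz}{1-zq}$, and that for $q=0$ the identity $\z(-n)=-\frac{B_{n+1}}{n+1}$, together with $\z(-2m)=0$, recovers the explicit Bernoulli expansion \eqref{anzze}.

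The main obstacle is the rigorous control of the shifted integral, i.e.\ showing that $\frac1{2\pi i}\int_{\rel s=-N-1/2}\z(s)\widetilde g(s)\b^{-s}\,\di s=O\big(\b^{N+1/2}\big)$, which upgrades the finite sum of residues to an asymptotic expansion valid to all orders and hence to the $o(\b^\infty)$ remainder in the statement. This requires a uniform bound on $\widetilde g(s)$ along vertical lines as $|\im s|\to+\infty$: integrating by parts repeatedly, using the smoothness of $g$ and the exponential decay of its derivatives, gives rapid decay of $\widetilde g$ in the imaginary direction, which dominates the polynomial growth of $\z$ in vertical strips and makes both the Cahen--Mellin integral absolutely convergent and the shifted integral genuinely $O\big(\b^{N+1/2}\big)$. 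An alternative route avoiding Mellin theory is the Euler--Maclaurin formula applied to $\sum_{k\ge1}g(\b k)$, which produces the same coefficients through $\z(-n)=-B_{n+1}/(n+1)$; I would nonetheless prefer the Mellin argument here because it delivers the $\z(-n)$ directly and makes the all-orders remainder transparent.
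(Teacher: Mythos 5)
Your argument is correct, and it is a genuinely different proof from the one in the paper. The paper handles $q=0$ by the explicit computation \eqref{anzze} and, for $q<0$, simply checks that $f(x)=\big(z^{-1}e^{\frac{ch}{L}x}-q\big)^{-1}$ belongs to $\ck_0(\br)\subset\ck(\br)$ and to $L^1([0,+\infty))$, and then invokes Lemma 2.11 of \cite{EGV} (whose proof rests on Ces\`aro summability of distributions); you instead derive the expansion from scratch through the Cahen--Mellin representation $\sum_{k\ge1}g(\b k)=\frac1{2\pi i}\int_{\rel s=\s}\z(s)\widetilde g(s)\b^{-s}\,\di s$ and a leftward contour shift, the pole of $\z$ at $s=1$ and the poles of $\widetilde g$ at $s=-n$ producing exactly the coefficients of \eqref{berrie}, with the rapid vertical decay of $\widetilde g$ (from integration by parts) controlling the shifted integral. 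Your route buys three things: it is self-contained; it treats $q=0$ and $q<0$ uniformly, because it only uses the behaviour of $g$ on the closed half-line $[0,+\infty)$ rather than a two-sided extension to the class $\ck(\br)$ (the lack of such an extension for $e^{-x}$ is precisely why the paper must dispose of $q=0$ separately); and, most notably, the hypotheses you actually use --- smoothness of $g$ on $[0,+\infty)$ and rapid decay of $g$ and all its derivatives --- are equally satisfied for $q\in(0,1]$ with $0<z<1/q$, where the denominator is still bounded below by $z^{-1}-q>0$, so written up carefully your argument would establish the extension of \eqref{berrie} to the Bose-like cases that the paper only conjectures after this proposition. Two points of care: the meromorphic continuation of $\widetilde g$ should be obtained strip by strip from a finite Taylor polynomial plus an $O(x^{N+1})$ remainder rather than by inserting the full Taylor series; and the series $\sum_{n}\z(-n)c_n\b^n/n!$ must be read as an asymptotic series --- which is all that your contour shift, and equally the paper's appeal to \cite{EGV}, delivers --- since for $q\neq 0$ the nearest complex pole of $g$ forces $|c_n|$ to grow like $n!$ times a geometric factor, so one should not expect the series to converge for any $\b>0$, the paper's condition (b) notwithstanding.
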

\begin{proof}
Since the case $q=0$ is explicitly described above, we can limit the analysis to $q<0$. In such a situation, $f(x)=\frac1{z^{-1}e^{\frac{ch}{L}x}-q}$, $x\in\br$, is infinitely often differentiable with $f(x)=O(1)$ and $f^{(n)}(x)=o\big((1/x)^\infty\big)$, $n\geq1$, for $|x|\to\infty$ and thus, with $\g=0$, 
$f\in\ck_\g(\br)\subset\ck(\br)$ (for the definition of the class of smooth functions $\ck$ see 
\cite{E, EGV}). In addition, $f\in L^1([0,+\infty),\di x)$, hence 
the assertion follows from \cite{EGV}, Lemma 2.11.
\end{proof}

One is tempted to extend the above mentioned result to the remaining cases, conjecturing that 
\eqref{berrie} would hold true also for $q\in(0,1]$.

Unfortunately, Lemma 2.11 in \cite{EGV} cannot be applied to \eqref{berrie} because it is unclear how to find a function $f\in\ck(\br)$ such that
$f\lceil_{[0,+\infty)}=\frac1{z^{-1}e^{\frac{ch}{L}x}-q}$. Such a mentioned lemma cannot be applied also to the function $e^{-x}$ involved in the case $q=0$, since it does not belong to the class $\ck$.
Yet, the expansion of $S_{\rm Number}$ is explicitly computed in \eqref{anzze}.

To be more precise, given a smooth function $g:[0,+\infty)\to[0,+\infty)$, only supported on the right half-line, such that 
$\sum_{k=1}^{+\infty}g(\eps k)<+\infty$ in a right neighborhhood $\eps\in(0,\d)$ of zero, the analogous of \eqref{berrie} would be
\begin{equation}
\label{berrie0}
\sum_{k=1}^{+\infty}g(\eps k)\sim\frac1{\eps}\int_0^{+\infty}g(x)\di x+\sum_{n=0}^{+\infty}\frac{\z(-n)g^{(n)}(0^+)}{n!}\eps^n\,,
\,\,\text{as}\, \eps\downarrow0\,.
\end{equation}

For such a kind of functions $g$, we note that the necessary conditions under which the r.h.s.\ of \eqref{berrie0} is meaningful are the following ones. Firstly, we should have
\begin{itemize}
\item[(a)] $g\in L^1\big([0,+\infty)\big)$.
\end{itemize}
By taking into account the asymptotics of the Bernoulli numbers (cf.\ \cite{Lee}) in (1.1) of \cite{A} and the Stirling formula, we compute
$$
\limsup_n\bigg(\frac{|B_{2n}|}{(2n)!}|g^{(2n)}(0)|\bigg)^\frac1{2n}\sim\frac1{2\pi}\limsup_n\Big(|g^{(2n)}(0)|^\frac1{2n}\Big)\,.
$$
Therefore, secondly, for the convergence of the series in the r.h.s. of \eqref{berrie0} in some neighborhhood of zero,
we must have
\begin{itemize}
\item[(b)] $\limsup_n\Big(|g^{(2n)}(0)|^\frac1{2n}\Big)<+\infty$.
\end{itemize}

It would be then meaningful to address the, still open, relevant question of whether the above conditions (a) and (b) on $g$ are also sufficient to assure the validity of \eqref{berrie0}. In this respect, we want to mention \cite{BE} in which such an analysis is carried out for some very particular examples appearing in Chapter 15 of Ramanujan's second notebook without any proof. 

It seems that, however, the involved series does converge also in the Bose-like cases for which 
$0< z<1/q$ when $q\in(0,1]$. Indeed, for the function $g(x)=\frac1{z^{-1}e^{\frac{ch}{L}x}-q}$, (a) is easily satisfied, and (b) above seems also to be satisfied as depicted in the figure below.
\begin{figure}[h!]
\begin{center}
  \includegraphics[width=10cm]{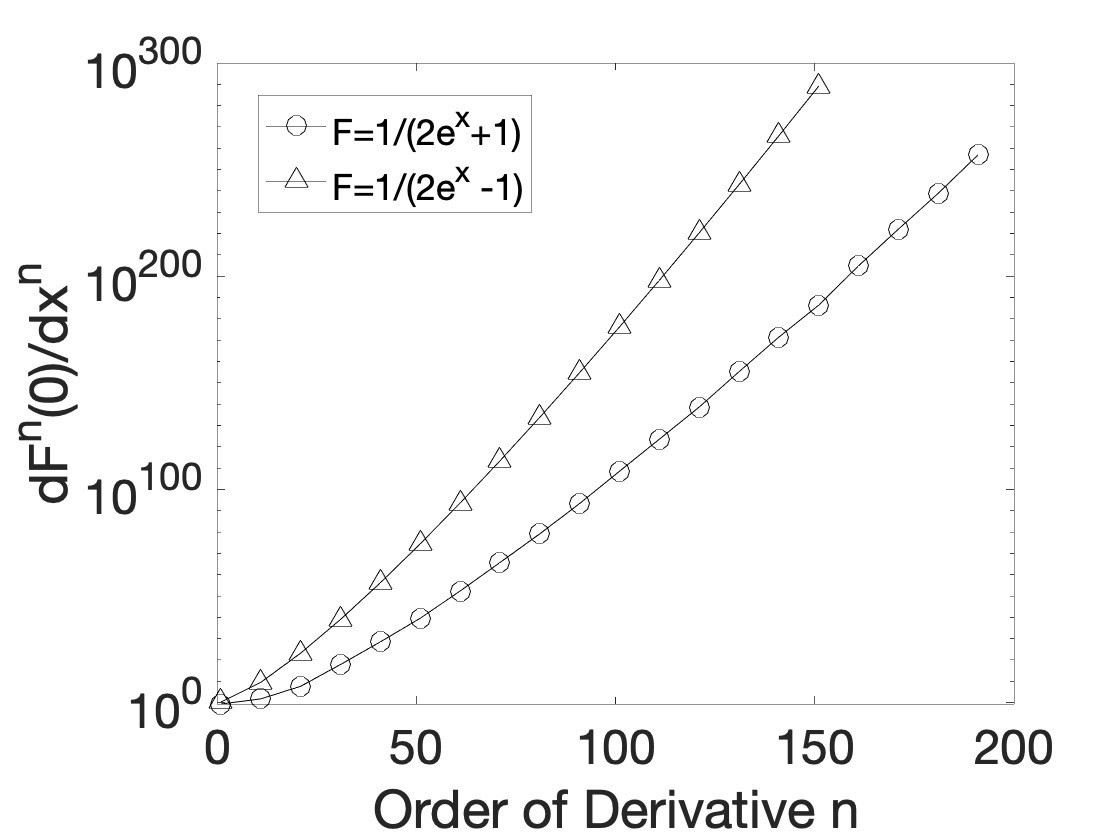}\\
  \caption{$n$-derivatives in the origin in logarithmic scale}
\end{center}
\end{figure}
\medskip

\section{The sphere $\bm{\bs^3}$.}
To provide a manageable three-dimensional example, we consider massless and massive quons living in the sphere 
$$
\bs_R^3:=\{(x,y,z,t)\in\br^4\mid x^2+y^2+z^2+t^2=R^2\}
$$
with radius $R>0$. We point out that, for the sphere $\bs^d\subset\br^{d+1}$, the general odd cases $d=2l+1$, $l>1$, can be similarly handled, whereas the even cases are more simple. We recall that $\hbar=\frac{h}{2\pi}$.

We start by recalling that the spectrum of the Dirac operator of $\bs_1^3\equiv\bs^3$ is 
$\{\pm(n+1/2)\mid n\in\bn\}$ with multiplicity $2\binom{n+1}{n}$, see e.g\ \cite{G}, Theorem 2.1.3. We first treat the massive case.
\begin{Prop}
For the massive $q$-particles living in the three-dimensional sphere $\bs_R^3$ in $\br^4$ of radius $R$, we get
\begin{align*}
S_{\rm Number}(\b)=&\left(4\pi\frac{\big(\sqrt{2m}R\big)^3}{\hbar^2}\int_0^{+\infty}\di y\frac{y^2}{z^{-1}e^{y^2}-q}\right)\frac1{\b^{3/2}}\\
-&\left(\pi\sqrt{2m}R\int_0^{+\infty}\frac{\di y}{z^{-1}e^{y^2}-q}\right)\frac1{\b^{1/2}}
+o\big(\b^\infty\big)\,,\,\,\,\text{as}\, \b\downarrow0\,.
\end{align*}
\end{Prop}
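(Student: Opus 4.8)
The plan is to reduce the spectral action to an explicit one-dimensional lattice sum, convert it into an integral by a Poisson-summation argument (exactly as in the proof of Proposition \ref{mssv}), and read off the two leading powers of $\b$ from the degeneracies of the Dirac operator on $\bs_R^3$.

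First, using \eqref{num0}, the spectrum of $D$ on $\bs_R^3$ recalled above, and $H=\hbar^2|D|^2/(2m)$ for the massive case, I would write the action explicitly. On $\bs_R^3$ the modulus $|D|$ has eigenvalues $x_n/R$ with $x_n:=n+\tfrac12$, and the degeneracy recalled above, once expressed as a function of the eigenvalue $x_n$, is the even quadratic $g(x_n)=2n(n+1)=2x_n^2-\tfrac12$ (cf.\ \cite{G}); note there is no linear term, and $g$ vanishes at $x=\tfrac12$, consistently with the lowest Dirac eigenvalue being $3/2$. This is precisely the polynomial that reproduces the quoted coefficients. Hence, with $a:=\b\hbar^2/(2mR^2)$,
\[
S_{\rm Number}(\b)=h\sum_{n\ge0}\frac{2x_n^2-\tfrac12}{z^{-1}e^{a x_n^2}-q}=h\sum_{n\ge0}F_a(x_n),\qquad F_a(x):=\frac{2x^2-\tfrac12}{z^{-1}e^{ax^2}-q}.
\]
The activity constraints \eqref{constra} guarantee that $z^{-1}e^{ax^2}-q$ is bounded below by a positive constant, so $F_a$ is smooth; since $e^{ax^2}$ forces Gaussian decay, $F_a\in\cs(\br)$ for each fixed $\b>0$, and $F_a$ is even.

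The crucial structural point is that the $x_n=n+\tfrac12$ are exactly the midpoints of the unit intervals, so $\{x_n\}_{n\in\bz}$ is the lattice $\bz$ shifted by $\tfrac12$, and evenness of $F_a$ gives $\sum_{n\ge0}F_a(x_n)=\tfrac12\sum_{n\in\bz}F_a(n+\tfrac12)$. I would then apply Poisson summation in its half-integer form, $\sum_{n\in\bz}F_a(n+\tfrac12)=\sum_{k\in\bz}(-1)^k\widehat{F_a}(k)$, whose $k=0$ term is $\int_\br F_a=2\int_0^\infty F_a$. This is the $\bs^3$ analogue of \cite{EI}, Proposition 2.27 (see also \cite{EGV}, Lemma 2.10) used in Proposition \ref{mssv}: after the rescaling $F_a(x)=(2x^2-\tfrac12)\,\Phi(\sqrt a\,x)$ with $\Phi(u):=(z^{-1}e^{u^2}-q)^{-1}\in\cs(\br)$, the tail $\sum_{k\ne0}(-1)^k\widehat{F_a}(k)$ is $o(\b^\infty)$, so that
\[
S_{\rm Number}(\b)=h\int_0^\infty F_a(x)\,\di x+o(\b^\infty),\qquad \b\downarrow0.
\]

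Finally, splitting $\int_0^\infty F_a=2\int_0^\infty x^2(z^{-1}e^{ax^2}-q)^{-1}\di x-\tfrac12\int_0^\infty (z^{-1}e^{ax^2}-q)^{-1}\di x$ and performing the change of variables $y=\sqrt a\,x=\tfrac{\hbar\sqrt\b}{\sqrt{2m}R}\,x$ produces the factors $a^{-3/2}$ and $a^{-1/2}$, i.e.\ the powers $\b^{-3/2}$ and $\b^{-1/2}$; together with $h=2\pi\hbar$ these yield the coefficients $4\pi(\sqrt{2m}R)^3/\hbar^2$ and $-\pi\sqrt{2m}R$ in front of $\int_0^\infty y^2(z^{-1}e^{y^2}-q)^{-1}\di y$ and $\int_0^\infty(z^{-1}e^{y^2}-q)^{-1}\di y$ respectively, which is the claim (no separate solid-angle integration is needed, the angular degeneracy being already encoded in $g$). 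The main obstacle is the uniform control of the Poisson tail $\sum_{k\ne0}(-1)^k\widehat{F_a}(k)=o(\b^\infty)$: one must check that, although the polynomial weight $2x^2-\tfrac12$ is not rescaled, differentiating the Fourier transform of $\Phi(\sqrt a\,\cdot)$ only produces negative powers of $a$, which are defeated by the super-polynomial decay of $\widehat\Phi$ at the large arguments $k/\sqrt a$. This is exactly where the Gaussian in the denominator (the massive case) is used, and it is also why no extra constant/condensate term appears here, unlike the continuum case: the finite-volume Dirac spectrum has a gap $x\ge\tfrac32>0$, so the sum stays protected as $z\uparrow1/q$.
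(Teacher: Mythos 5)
Your proof is correct and takes essentially the same route as the paper's: the paper rewrites the action as $\sum_{k\in\bz}k(k+1)f\big(\tfrac{k+1/2}{\La}\big)$ with $f\in\cs(\br)$ and invokes the Poisson-summation computation of \cite{CC1}, Section 2.2, which is precisely the half-integer Poisson argument you spell out. Your bookkeeping of the degeneracy $2n(n+1)=2\big(n+\tfrac12\big)^2-\tfrac12$, the $o(\b^\infty)$ tail estimate, and the resulting coefficients $4\pi(\sqrt{2m}R)^3/\hbar^2$ and $-\pi\sqrt{2m}R$ all match the paper's assertion.
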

\begin{proof}
The proof is provided in \cite{CC1}, Section 2.2. Indeed, by using the spectral resolution of the identity of the Dirac operator on the sphere described above, we get
\begin{equation*}
S_{\rm Number}(\b)=\sum_{k\in\bz}k(k+1)f\Big(\frac{k+1/2}{\La}\Big)
\end{equation*}
where, in our situation, $\La=1/\sqrt{\b}$ and
$$
f(x)=\frac{h}{z^{-1}e^{\frac{\hbar^2}{2mR^2}x^2}-q}\in\cs(\br)\,.
$$

Therefore, reasoning as in \cite{CC1}, we obtain the assertion after an elementary change of variable.
\end{proof}

The massless case is much more complicated than the massive one because the involved function
$\frac{h}{z^{-1}e^{\frac{c\hbar}{R}|x|}-q}$    
is not smooth. However, we are still able to provide the asymptotic expansion of $S_{\rm Number}$, at least for the Fermi-like and Boltzmann particles.
\begin{Prop}
\label{masswr}
For the massless $q$-particles, $q\in[-1,0]$, living in the three-dimensional sphere $\bs_R^3$ in $\br^4$ of radius $R$, we get
\begin{equation}
\label{esvarg101}
\begin{split}
S_{\rm Number}(\b)=&\left(\frac{4\pi R^3}{c^3\hbar^2}\int_0^{+\infty}\di y\frac{y^2}{z^{-1}e^{y}-q}\right)\frac1{\b^{3}}\\
+&\sum_{n=-2}^{+\infty}b_n\b^n
+o\big(\b^\infty\big)\,,\,\,\,\text{as}\, \b\downarrow0\,.
\end{split}
\end{equation}
\end{Prop}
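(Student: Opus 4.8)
The plan is to imitate the proof of the massive sphere case treated just above, replacing the cut-off $\La=1/\sqrt{\b}$ by $\La=1/\b$ as prescribed in \eqref{cuof}, and then to cope with the loss of smoothness caused by the modulus. Using the spectral resolution of the Dirac operator on $\bs_R^3$ exactly as before, I would write
\begin{equation*}
S_{\rm Number}(\b)=\sum_{k\in\bz}k(k+1)\,f\big(\b(k+1/2)\big),\qquad f(x)=\frac{h}{z^{-1}e^{\frac{c\hbar}{R}|x|}-q}\,.
\end{equation*}
Since $f$ is even and $k(k+1)$ is invariant under $k\mapsto -k-1$, this involution pairs the terms and the two terms with $k(k+1)=0$ drop out, so that $S_{\rm Number}(\b)=2\sum_{k\geq1}k(k+1)\,g\big(\b(k+1/2)\big)$, where $g$ is the restriction of $f$ to $[0,+\infty)$. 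The crucial difference from the massive case is that $f$ is not smooth at the origin, so Poisson summation no longer yields only an $o\big(\b^\infty\big)$ correction; as in the one dimensional massless situation of Proposition \ref{mls} and formula \eqref{berrie0}, one must expect a genuine power series in $\b$.

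The main tool I would use is the Mellin transform $\tilde g(s):=\int_0^{+\infty}g(x)x^{s-1}\,\di x$, which for $q\in[-1,0]$ is well defined and analytic for $\rel s>0$ because the denominator $z^{-1}e^{\frac{c\hbar}{R}x}+|q|$ is bounded away from zero and $g$ decays exponentially at $+\infty$. Writing $k(k+1)=(k+1/2)^2-\tfrac14$ and inserting the inverse Mellin representation of $g$, the lattice sums collapse to Hurwitz zeta functions: for $\rel s$ large,
\begin{equation*}
S_{\rm Number}(\b)=\frac{1}{\p\imath}\int_{\rel s=\s}\tilde g(s)\Big(\z(s-2,\tfrac32)-\tfrac14\z(s,\tfrac32)\Big)\b^{-s}\,\di s\,.
\end{equation*}

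The asymptotics as $\b\downarrow0$ then follow by shifting the contour to the left and collecting residues. The simple pole of $\z(s-2,\tfrac32)$ at $s=3$ (residue $1$) produces the leading term $2\tilde g(3)\b^{-3}$; computing $\tilde g(3)$ by the change of variable $y=\frac{c\hbar}{R}x$ and using $h=2\pi\hbar$ gives exactly $\frac{4\pi R^3}{c^3\hbar^2}\big(\int_0^{+\infty}y^2/(z^{-1}e^y-q)\,\di y\big)\b^{-3}$, matching the claim. The pole of $\z(s,\tfrac32)$ at $s=1$ contributes a $\b^{-1}$ term, while the simple poles of $\tilde g$ at $s=-n$, $n\geq0$ (with residues $g^{(n)}(0^+)/n!$ coming from the Taylor expansion of the smooth $g$ at the origin) contribute the $b_n\b^n$ with coefficients built from $\z(-n-2,\tfrac32)$ and $\z(-n,\tfrac32)$; this reproduces the sum $\sum_{n\geq-2}b_n\b^n$, where in fact the weight being even in $k+1/2$ forces $b_{-2}=0$. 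The exponential decay of $g$ at $+\infty$ lets the shifted integral be pushed arbitrarily far to the left, which yields the $o\big(\b^\infty\big)$ remainder.

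The main obstacle is the rigorous justification of this scheme, together with the reason why only $q\in[-1,0]$ is admissible. For $q\leq0$ the function $g$ is smooth and exponentially decaying on $[0,+\infty)$, so $\tilde g$ has the clean meromorphic structure described above and the residue computation is legitimate; for $q\in(0,1]$ the constraint $z<1/q$ and the incipient condensation spoil this (there is no smooth $\ck(\br)$ extension, exactly the difficulty already flagged after Proposition \ref{mls}), which is why the statement is limited. The delicate point is thus not the bookkeeping of residues but establishing the meromorphic continuation of $\tilde g$ and controlling the shifted-contour remainder; alternatively one may phrase the whole argument through the weighted Euler--Maclaurin lemma of \cite{EGV, EI} used for the massive sphere in \cite{CC1}, the extra work being its adaptation to the polynomial multiplicity $k(k+1)$ and the half-integer shift $k+1/2$.
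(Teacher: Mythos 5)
Your argument is correct in substance and reaches the same expansion, but it follows a genuinely different route from the paper. The paper does not use Mellin transforms at all: it first performs an index gymnastics on $\sum_{k\geq1}k(k+1)f\big(\tfrac{2k+1}{2\La}\big)$ (shifting $k$ and splitting the weight as $(k+1/2)^2-\tfrac14$) so as to rewrite $S_{\rm Number}(\eps)$ as $\frac1{\eps^2}\sum_{k\geq1}\big[g(\tfrac{\eps}2k)-g(\eps k)\big]+\frac14\sum_{k\geq1}\big[f(\eps k)-f(\tfrac{\eps}2k)\big]$ with $g(x)=x^2f(x)$, i.e.\ as a combination of four \emph{integer-lattice} sums $\sum_k h(\eps k)$; it then checks $f\in\ck_0(\br)$, $g\in\ck_2(\br)$, $f,g\in L^1([0,+\infty))$ and invokes Lemma 2.11 of \cite{EGV} (the expansion \eqref{berrie0} with Riemann zeta values $\z(-n)$) four times, the restriction to $q\in[-1,0]$ entering exactly where you say it does. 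Your version instead stays on the half-integer lattice, trades the EGV lemma for the meromorphic continuation of $\tilde g(s)=\int_0^{+\infty}g(x)x^{s-1}\,\di x$ and a contour shift against $\z(s-2,\tfrac32)-\tfrac14\z(s,\tfrac32)$. What you gain is a cleaner bookkeeping: the coefficients $b_n$ come out directly as Hurwitz zeta values times $g^{(n)}(0^+)/n!$, the absence of a $\b^{-2}$ term is transparent (no pole at $s=2$), and the leading coefficient check ($2\tilde g(3)=\frac{4\pi R^3}{c^3\hbar^2}\int_0^{+\infty}\frac{y^2\,\di y}{z^{-1}e^{y}-q}$ after $y=\frac{c\hbar}{R}x$ and $h=2\pi\hbar$) is immediate and agrees with \eqref{esvarg101}. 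What it costs is that the continuation of $\tilde g$ to simple poles at $s=-n$ and the rapid decay on vertical lines justifying the shift must be proved by hand — this is standard for a smooth, exponentially decaying $g$ on $[0,+\infty)$ (repeated integration by parts in the Mellin integral), but it essentially amounts to re-deriving the EGV lemma in the Hurwitz setting; your closing remark that one could instead adapt the weighted Euler--Maclaurin lemma to the multiplicity $k(k+1)$ and the shift $k+1/2$ is precisely what the paper's re-indexing accomplishes. Note also that the two parametrisations of the Hurwitz values are consistent, since $\z(s,\tfrac32)=(2^s-1)\z(s)-2^s$ reproduces the $(2^{-n}-1)\z(-n)$ factors that arise from the paper's differences of sums at steps $\eps$ and $\eps/2$.
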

We note that the coefficients $b_n$ can be explicitly determined as explained in the forthcoming proof.
\begin{proof}
With $\eps=\b=1/\La$, and considering the spectral behaviour of the Dirac operator on the sphere $\bs^3$ described above, we compute
\begin{align*}
S_{\rm Number}(1/\La)=&\sum_{k=1}^{+\infty}k(k+1)f\Big(\frac{2k+1}{2\La}\Big)\\
=&\sum_{k=3}^{+\infty}\left[\Big(\frac{k-1}2\Big)^2+\frac{k-1}2\right]f\Big(\frac{k}{2\La}\Big)
-\sum_{k=2}^{+\infty}\left[\Big(k-\frac12\Big)^2+k-\frac12\right]f\Big(\frac{k}{\La}\Big)\\
=&\sum_{k=1}^{+\infty}\left[\Big(\frac{k-1}2\Big)^2+\frac{k-1}2\right]f\Big(\frac{k}{2\La}\Big)
-\sum_{k=1}^{+\infty}\left[\Big(k-\frac12\Big)^2+k-\frac12\right]f\Big(\frac{k}{\La}\Big)\\
=&\sum_{k=1}^{+\infty}\Big(\frac{k^2}4-\frac14\Big)f\Big(\frac{k}{2\La}\Big)
-\sum_{k=1}^{+\infty}\Big(k^2-\frac14\Big)f\Big(\frac{k}{\La}\Big)\\
=&\La^2\sum_{k=1}^{+\infty}\left[g\Big(\frac1{2\La}k\Big)-g\Big(\frac1{\La}k\Big)\right]
+\frac14\sum_{k=1}^{+\infty}\left[f\Big(\frac1{\La}k\Big)-f\Big(\frac1{2\La}k\Big)\right]\,.
\end{align*}
Therefore,
\begin{equation}
\label{esvarg1}
S_{\rm Number}(\eps)=\frac1{\eps^2}\sum_{k=1}^{+\infty}\left[g\Big(\frac{\eps}2 k\Big)-g(\eps k)\right]
+\frac14\sum_{k=1}^{+\infty}\left[f(\eps k)-f\Big(\frac{\eps}2 k\Big)\right]\,.
\end{equation}
Since the case $q=0$ can be explicitly handled as at the beginning of the previous section,
we reduce the matter to $q<0$. 

For such a purpose, we first observe that $f(x):=\frac{2h}{z^{-1}e^{\frac{c\hbar}{R}x}-q}$ belongs to $\ck_0(\br)$, $g(x):=x^2\frac{2h}{z^{-1}e^{\frac{c\hbar}{R}x}-q}$ belongs to 
$\ck_2(\g)$, and thus both $f$ and $g$ belong to $\ck(\br)$. In addition, $f,g\in L^1([0,+\infty),\di x)$.
Hence, for the remaining cases $q\in[-1,0)$, we can apply
Lemma 2.11 in \cite{EGV} to the four series in  \eqref{esvarg1}, allowing us to explicitly compute the searched expansion for $S_{\rm Number}$. 

Concerning the leading term, which comes from the first two series, we obtain after the elementary change of variables $y=\frac{c\hbar}{R}x$,
$$
S_{\rm Number}(\eps)\sim\frac{2h}{\eps^3}\int_0^{+\infty}\di x\frac{x^2}{z^{-1}e^{\frac{c\hbar}{R}x}-q}
=\left(\frac{4\pi R^3}{c^3\hbar^2}\int_0^{+\infty}\di y\frac{y^2}{z^{-1}e^{y}-q}\right)\frac1{\eps^3}\,.
$$
\end{proof}

As explained in the previous section, we can conjecture that \eqref{esvarg101} might hold true also for the remaining cases $q\in(0,1]$.

\section{Examples and remarks}
\label{rmsex}

This section is devoted to some remarks relative to the topic of the present paper. We also consider in some detail the relativistic massive case.
\medskip

\noindent
\textbf{Comparison with the infinitely extended systems.}
Firstly, we want to note that the condensation phenomena cannot appear, at least in a purely mathematical investigation, if the bottom of the  spectrum of the one-particle Hamiltonian 
$\min\s(H)$ is an eigenvalue, see e.g.\ \cite{BR}, Section 5.2.2.
This certainly happens in a finite-volume theory, or also for an infinitely extended system made of noninteracting harmonic oscillators, that is when the one-particle Hamiltonian is given by $H=\frac{p^2}{2m}+K^2 q^2$, $K$ being the Hooke elastic constant.
Therefore, the constant term in \eqref{asdt000} has a completely different meaning w.r.t. those in \eqref{1asdt1} and \eqref{1asdt0}. 

\medskip

The second thing we want to point out is that the leading terms of the spectral actions of the expansions w.r.t.\ $1/\b=k_{\rm B}T$ of the finite volume theories, coincide with the corresponding ones obtained after the passage to the continuum (compare the computations in Sections \ref{fvte} 
and \ref{2mic2}). We also note that, in the latter situation, there is only one term, the leading one, in the expansion of the spectral actions. In other words,
all possible other terms automatically disappear after the passage to the continuum.
The possible appearance of a constant term in the infinite volume theory is connected only with the condensation phenomena as explained above. It would be of interest to give a reasonable physical explanation of these facts.
\medskip

\noindent
\textbf{On the asymptotics in the multi-dimensional massless case.}
Considering together Propositions \ref{lere} and \ref{mls}, one would conjecture that
\begin{equation}
\label{anzaf}
\begin{split}
S_{\rm Number}(\b)=&\sum_{s=0}^{d-1}\left(\binom{d}{s}\frac{L^{d-s}\Om_{d-s}}{c^{d-s}h^{d-s-1}}\int_0^{+\infty}\di y\frac{y^{d-s-1}}{z^{-1}e^{y}-q}\right)\frac1{\b^{d-s}}+\frac{hz}{1-zq}\\
+&2dh\sum_{n=0}^{+\infty}\frac{\z(-n)c_n}{n!}\b^n+o\big(\b^{\infty}\big)\,,\,\,\,\text{as}\, \b\downarrow0\,.
\end{split}
\end{equation}
Here, the above ansatz comes by partitioning the underlying lattice $\bz^d$ in suitable subsets as described in the proof of Proposition \ref{lere}.

The proof of \eqref{anzaf}, which describes the expansion of the spectral actions in the finite volume theories when $d>1$, would follow if one proved the multi-dimensional analogue of \cite{EGV}, Lemma 2.11.  In this multi-dimensional context, we note that the expansion \eqref{esvarg101} for the 3-dimensional sphere indeed has a form similar to \eqref{anzaf}.
\medskip

\noindent
\textbf{Relativistic massive Hamiltonian.}
We now briefly discuss the asymptotics associated to the massive relativistic Hamiltonian given by
$$
H=c\sqrt{p^2+m^2c^2}
$$
where, as usual, $c$ is the speed of the light and $m>0$ the mass at rest of the involved particles. We reduce the matter to a noninteracting gas of relativistic particles in a finite volume in $\br^d$ for the case $q=0$. To simplify, we also put $c=1=m$ and $z=1$ for the activity, all other remaining cases (i.e.\ for example for Bose/Fermi cases when $q\pm1$) can be handled similarly. 

In such a situation, we must investigate the asymptotics of a series of the form
\begin{equation}
\label{sumto}
\sum_{{\bf k}\in\bz^d}e^{-\b\sqrt{k^2+1}}=\sum_{{\bf k}\in\bz^d}e^{-\sqrt{(\b k)^2+\b^2}}
=\sum_{{\bf k}\in\bz^d}f_\b(\b k)\,.
\end{equation}

Since no result involving the detailed asymptotics is available for series like those described above, we limit the analysis to extract the leading term. As we have seen before, we can argue that this leading term directly provides the ``limit to the continuum". We also note that some condensation effects (for the cases $q>0$), neglected in this analysis, may also take place in the relativistic situation, see e.g.\ \cite{Z}.

As usual, by approximating a series as in \eqref{sumto} with an integral, we must handle 
$\left(\int_0^{+\infty}\di x\,x^{d-1}e^{-\sqrt{x^2+\b^2}}\right)\frac1{\b^d}=\frac{I(\b)}{\b^d}$, 
where $I(\b):=\int_0^{+\infty}\di x\,x^{d-1}e^{-\sqrt{x^2+\b^2}}$. 
\begin{Prop}
For the series in \eqref{sumto}, we get
$$
\sum_{{\bf k}\in\bz^d}e^{-\b\sqrt{k^2+1}}=\Big(\Om_d\int_0^{+\infty}\di x\,x^{d-1}e^{-x}\Big)\frac1{\b^d}+o(1/\b^d)\,.
$$
\end{Prop}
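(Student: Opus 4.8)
The plan is to establish the leading-order asymptotics of the multi-dimensional series $\sum_{\mathbf{k}\in\bz^d}e^{-\b\sqrt{k^2+1}}$ by comparing it with the corresponding integral and then extracting the dominant behaviour of $I(\b)$ as $\b\downarrow0$. First I would reduce the lattice sum to an integral. Writing $f_\b(\mathbf{x})=e^{-\sqrt{x^2+\b^2}}$ as in \eqref{sumto}, one rescales by $\b$ so that the sum becomes a Riemann-type sum with mesh $\b$ for the integral $\frac{1}{\b^d}\int_{\br^d}e^{-\sqrt{x^2+\b^2}}\,\di^d\mathbf{x}$. Since the integrand $f_\b$ is radially monotone decreasing in $|\mathbf{x}|$ and rapidly decaying at infinity, the difference between the lattice sum and the integral is controlled by the boundary contributions on the coordinate hyperplanes; one obtains a remainder of lower order in $1/\b$, so that
$$
\sum_{\mathbf{k}\in\bz^d}e^{-\b\sqrt{k^2+1}}=\frac{1}{\b^d}\int_{\br^d}e^{-\sqrt{x^2+\b^2}}\,\di^d\mathbf{x}+o(1/\b^d)\,.
$$

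Next I would pass to polar coordinates, integrating over the solid angle $\Om_d$, which converts the volume integral into $\Om_d\int_0^{+\infty}\di x\,x^{d-1}e^{-\sqrt{x^2+\b^2}}=\Om_d\,I(\b)$, matching the quantity $I(\b)/\b^d$ flagged just before the statement. The remaining task is purely one-dimensional: to show that $I(\b)\to\int_0^{+\infty}\di x\,x^{d-1}e^{-x}$ as $\b\downarrow0$. This follows from dominated convergence, since $\sqrt{x^2+\b^2}\geq x$ gives the $\b$-uniform bound $x^{d-1}e^{-\sqrt{x^2+\b^2}}\leq x^{d-1}e^{-x}\in L^1([0,+\infty))$, while pointwise $\sqrt{x^2+\b^2}\to x$ for every fixed $x>0$. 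Hence $\Om_d I(\b)=\Om_d\int_0^{+\infty}\di x\,x^{d-1}e^{-x}+o(1)$, and dividing the whole expression by $\b^d$ reproduces the claimed leading term with the stated $o(1/\b^d)$ error.

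The main obstacle is controlling the sum-to-integral comparison error and confirming it is genuinely $o(1/\b^d)$ rather than merely $O(1/\b^d)$. Unlike the earlier Propositions, where the Euler--Maclaurin or Epstein--Glaser machinery (\cite{EGV, EI}) yields exponentially small corrections $o(\b^\infty)$, here the nonsmoothness of the map $\mathbf{x}\mapsto\sqrt{x^2+\b^2}$ at the origin (and its $\b$-dependence) prevents invoking those results directly. I would therefore argue the comparison by hand, partitioning $\br^d$ into unit cells centred at lattice points and estimating $|f_\b(\mathbf{k})-\int_{\text{cell}}f_\b|$ via the mean-value inequality with the gradient $|\nabla f_\b|$; summing these local errors and rescaling shows the total is of order $1/\b^{d-1}$, hence $o(1/\b^d)$. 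One should also verify that the full-lattice sum and the nonnegative-orthant sums differ only by the lower-dimensional boundary sums, which are likewise subleading, so that the normalisation constant $\Om_d$ emerges correctly from the angular integration rather than from an orthant count.
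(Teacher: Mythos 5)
Your argument is correct and follows essentially the same route as the paper: first reduce the lattice sum to $\Om_d I(\b)/\b^d$ with an $o(1/\b^d)$ error (the paper does this by invoking the same comparison as in Proposition \ref{lere}, you by an explicit cell-by-cell gradient estimate, which is a fine and if anything more self-contained way to get the $O(1/\b^{d-1})$ bound), and then pass to the limit inside $I(\b)$. The paper uses monotone convergence where you use dominated convergence, but since $e^{-\sqrt{x^2+\b^2}}$ increases pointwise to $e^{-x}$ as $\b\downarrow0$ and is dominated by it, the two are interchangeable here.
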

\begin{proof}
By following the same lines in Proposition \ref{lere}, we get
$$
\sum_{{\bf k}\in\bz^d}e^{-\b\sqrt{k^2+1}}=\Big(\Om_d\int_0^{+\infty}\di x\,x^{d-1}e^{-\sqrt{x^2+\b^2}}\Big)\frac1{\b^d}+o(1/\b^d)\,.
$$

Since $\sqrt{x^2+\b^2}$ decreases as $\b\downarrow0$, a simple application of the monotone convergence theorem leads to
$$
\int_0^{+\infty}\di x\,x^{d-1}e^{-\sqrt{x^2+\b^2}}-\int_0^{+\infty}\di x\,x^{d-1}e^{-x}=o(1)\,,
$$
and thus the assertion follows.
\end{proof}

\section*{Acknowledgement} 

The authors acknowledge MIUR Excellence Department Project awarded to the
Department of Mathematics, University of Rome ``Tor Vergata'', CUP
E83C18000100006, and Italian INdAM-GNAMPA. 
The first-named author is partially supported by MIUR-FARE R16X5RB55W QUEST-NET. The authors are also grateful to S. Marullo for a careful numerical computation relative to the number spectral actions, reported in the picture in the present paper.

\end{document}